\def\BibTeX{{\rm B\kern-.05em{\sc i\kern-.025em b}\kern-.08em
    T\kern-.1667em\lower.7ex\hbox{E}\kern-.125emX}}
\theoremstyle{definition}
\newtheorem{theorem}{Theorem}
\newtheorem{corollary}{Corollary}
\newtheorem{definition}{Definition}
\newtheorem{example}{Example}
\newtheorem{lemma}{Lemma}
\newtheorem{proposition}{Proposition}
\newcommand{\argmax}{\mathop{\rm arg~max}\limits}
\title{\huge{The Coarse Nash Bargaining Solutions}\thanks{The authors are grateful to  Koichi Tadenuma for their helpful comments  and suggestions from the very early stages of this research.
We also thank Walter Bossert, Youngsub Chun, Youichiro Higashi, Toru Hokari, Kohei Kamaga, Noriaki Kiguchi, and Kaname Miyagishima, as well as seminar participants at Okayama University and at the Waseda Workshop on Game Theory and Experiment 2025, Sophia University, and the 31st Decentralization Conference in Japan for their helpful comments.
We gratefully acknowledge the financial support from the Japan Society
for the Promotion of Science KAKENHI, Nos. 25K16606 (Nakada) and 25KJ1298 (Nakamura).}}
\author{Satoshi Nakada\thanks{School of Management, Department of Business Economics, Tokyo University of Science, 1-11-2, Fujimi, Chiyoda-
ku, Tokyo, 102-0071, Japan. Email: snakada@rs.tus.ac.jp} 
\and
Kensei Nakamura\thanks{Graduate School of Economics, Hitotsubashi University, 2-1, Naka, Kunitachi, Tokyo 186-8601, Japan. E-mail: kensei.nakamura.econ@gmail.com}
}
\date{This version: \today}
\begin{document}

\maketitle


\begin{abstract}
This paper studies the axiomatic bargaining problem and proposes a new class of bargaining solutions, called \textit{coarse Nash solutions}. 
These solutions assign to each problem a set of outcomes coarser than that chosen by the classical Nash solution (\citealp{nash1950bargaining}). Our main result shows that these solutions can be characterized by new rationality axioms for choice correspondences, which are modifications of Nash's independence of irrelevant alternatives (or more precisely, \citeauthor{arrow1959rational}'s \citeyearpar{arrow1959rational} choice axiom), when combined with standard axioms. 
\vspace{5mm}
\\
\noindent
\textbf{Keywords:} Axiomatic bargaining, Nash solution, Independence of irrelevant alternatives\\
\textbf{JEL Classifications:} D31, D63, D74
\end{abstract}

\section{Introduction}

Consider a group of individuals facing a collective choice problem, in which each individual only cares about the payoff of the chosen outcomes. 
In a seminal paper, \citet{nash1950bargaining} formalized these situations as bargaining problems (sets of feasible utility vectors) and studied mappings $F$ that indicate the chosen outcomes, called bargaining solutions. 
Nash introduced several desirable axioms for bargaining solutions and showed that the Nash solution, which selects alternatives that maximize the product of the individuals’ utility levels of the outcomes for each feasible set, is the unique solution that satisfies all of them.

A key axiom for the Nash solution is \citeauthor{arrow1959rational}’s (\citeyear{arrow1959rational}) choice axiom, a set-valued version of Nash’s independence of irrelevant alternatives (cf. \citealp{kaneko1980extension,xu2006alternative}).
This axiom states that if the feasible set expands and the rule still chooses some alternatives from the original set, then all originally chosen alternatives must still be selected (i.e., for two problems $S, S'$ such that $S \subseteq S'$, $F(S')\cap S\neq \emptyset$ implies $F(S) = F(S')\cap S$).
This property is closely related to the rationalizability of multi-valued bargaining solutions or choice correspondences via welfare functions or orderings \citep{ok1999revealed}.

Arrow’s axiom has been criticized from various perspectives, leading to the proposal of weaker rationality axioms for choice correspondences and alternative solution concepts. 
It is well-known that Arrow’s axiom can be decomposed into two weaker axioms: the Chernoff axiom and the dual Chernoff axiom. 
Suppose that a choice set $S$ expands to $S'$ and the rule still selects some outcomes from $S$. 
The Chernoff axiom requires that if the chosen outcomes in $S'$ are feasible in $S$, then they must also be chosen in $S$ (i.e., $F(S')\cap S\neq \emptyset$ implies $F(S) \supseteq F(S')\cap S$). 
The dual Chernoff axiom requires that whenever some outcome chosen in $S$ is selected in $S'$, all outcomes chosen in $S$ must also be done so in $S'$ (i.e., $F(S')\cap S\neq \emptyset$ implies $F(S) \subseteq F(S')\cap S$). 
Most criticisms of Arrow's axiom target the Chernoff axiom.
In contrast, we argue that the dual Chernoff axiom is also subject to critique.\footnote{
For instance, \citet{Thomson1994CHAP} questioned Nash's independence of irrelevant alternatives under the name of contraction independence as follows:  ``(I)f the contraction described in the hypothesis of contraction independence is skewed against a particular agent, why should the compromise be prevented from moving against him?'' This criticism is related to whether the outcome chosen in the original problem should remain chosen in a smaller problem, and this relates to the Chernoff axiom. Similar arguments have been made by many other studies, such as \citet{roth1979axiomatic} and \citet{luce2012games}. Motivated by these criticisms, various alternative solution concepts, such as the Kalai-Smorodinsky solution \citep{kalai1975other}, have been introduced. 
}

To illustrate, suppose that $a$ and $b$ are available in $S$ and that both are chosen because they are incomparable. 
If $c$ becomes available in the expanded problem $S' = S \cup \{c\}$ and $c$ is preferred to $a$ but incomparable to $b$, then choosing $b$ and $c$ in $S'$ but not $a$ seems reasonable. 
However, the dual Chernoff axiom rules out such a decision because it requires that if $b$ is chosen, then $a$ should also be chosen.

This condition is naturally justified when an explicit objective function exists, as in the Nash solution, because the chosen outcomes coincide with those that dominate the remaining outcomes in terms of the objective function. However, constructing a single objective function becomes difficult when multiple criteria must be considered. 
For example, difficulties in aggregating diverse value judgments, such as efficiency and equity, or attempting to determine the bargaining power of individuals in real-life negotiations often resist reduction to a single objective function. 
In such situations, the dominance relation itself may remain incomplete, and the solution concept should allow for greater flexibility or a certain ``buffer.''

This observation raises a natural question: What rationality axioms should we impose to construct collective choice rules, and what class of rules satisfies them? 
To address this issue, we introduce two axioms by weakening Arrow's choice axiom.
The first axiom modifies the dual Chernoff axiom while retaining the Chernoff axiom.
Recall that the original dual Chernoff axiom implies that if an outcome is chosen in both $S$ and $T$, it should also be chosen in $S \cup T$. 
Our axiom also requires the converse: If an outcome is chosen in $S \cup T$ and is feasible in both $S$ and $T$, it must be chosen in both. 
We call this axiom the \textit{weak Arrow axiom} because it coincides with the combination of the Chernoff axiom and the weak dual Chernoff axiom known in the literature.
The second axiom postulates that all outcomes chosen in the original problem should be selected in an expanded problem if no outcome that becomes feasible owing to the expansion is chosen.  
This prevents the disappearance of all chosen outcomes upon irrelevant expansions; we therefore call this axiom \textit{independence of irrelevant expansions}. 
We investigate implications of these axioms under standard axioms such as efficiency, scale invariance, and continuity.

Our main result shows that, under these axioms, the dominance relation induced by a solution can be generalized. 
Recall that the dominance relation in the Nash solution is represented by comparisons of the products of individuals' utility levels.
Equivalently, $x$ dominates $y$ if the log-transformed difference in their utility vectors lies in the set $\{ z \in \mathbb{R}^n \mid \sum_{i = 1}^n z_i > 0 \}$. 
Our representation theorem generalizes this set to any open set $A$, which we call an \textit{\textbf{improving set}}, satisfying two conditions: (i) $A$ contains all strictly positive vectors and no weakly negative vectors (monotonicity), and (ii) for all $x, y \in A$, $x + y \in A$ (two-stage improvement property).

This class of solutions includes several meaningful examples.
At one extreme are the weighted Nash solutions, while at the other extreme is the solution that assigns all weakly Pareto optimal outcomes.
If $A$ is a convex cone, the dominance relation corresponds to the unanimity rule over multiple weighted Nash solutions. 
Since these solutions in the main result are less decisive than the original Nash solution but still based on the comparison of the Nash product, we call them \textbf{\textit{coarse Nash solutions}}.

Furthermore, we formally show that these solutions yield larger choice sets than weighted Nash solutions: More precisely, for every coarse Nash solution, there exists a weight vector such that its chosen set always includes the outcomes chosen by the corresponding weighted Nash solution. Thus, each coarse Nash solution is literally coarser than some weighted Nash solution.

This paper is organized as follows. Section \ref{sec_setup} formalizes the bargaining problem and the basic axioms satisfied by the Nash solution. Section \ref{sec_rat} introduces axioms motivated by the limitations of Arrow’s choice axiom. Section \ref{sec_main} examines the implications of these axioms under standard conditions and shows that any solution satisfying them can be represented as a coarse Nash solution. 
Section \ref{sec_literature} discusses the related literature. 
Finally, Section \ref{sec_con} concludes the paper.

\section{The bargaining problem}\label{sec_setup}

Let $N = \{ 1,2,\ldots, n \}$ be the fixed set of players, where $n\geq 2$. 
If they reach a unanimous agreement, then they attain a utility vector $x = (x_1, x_2, \ldots, x_n) \in \mathbb{R}_{++}^n$, where $x_i$ is player $i$'s utility level.\footnote{Let $\mathbb{R}$ (resp. $\mathbb{R}_{+}$, $ \mathbb{R}_{++}, $ and $\mathbb{R}_{-}$) denote the set of real numbers (resp. nonnegative numbers, positive numbers, and nonpositive numbers). 
Let $\mathbb{R}^n$ (resp. $\mathbb{R}^n_{+}$, $ \mathbb{R}^n_{++}$, and $ \mathbb{R}^n_{-}$) denote the $n$-fold Cartesian product of $\mathbb{R}$ (resp. $\mathbb{R}_{+}$, $ \mathbb{R}_{++}$, and $ \mathbb{R}_{-} $). Let $\mathbb{N}$ be the set of natural numbers and $\mathbb{Q}$ be the set of rational numbers.} 
If they fail to reach an agreement, then they stay at the origin $\mathbf{0} = (0,0,\ldots, 0) \in \mathbb{R}^n$. 
Here, we assume that all agreements give the players positive utility levels. 
A \textit{(bargaining) problem} $S\subseteq \mathbb{R}^n_{++}$ is a set of utility vectors attainable through unanimous agreements. 
We assume that each problem $S$ is compact relative to $\mathbb{R}^n_{++}$ and comprehensive (i.e., for all $x,y\in \mathbb{R}^n_{++}$ with $x\geq y$, $x\in S$ implies $y\in S$).%
\footnote{For $x,y\in \mathbb{R}^n$, we write $x \gg y$ if $x_i > y_i$ for all $i\in N$, and $x \geq y$ if $x_i \geq y_i$ for all $i\in N$. }  
Let $\mathcal{B}$ be the set of problems. 
For any $A \subseteq \mathbb{R}_{++}$, let
\[
\text{cmp}A=\{y \in \mathbb{R}^n_{++} \mid \text{there exists}~x \in A~\text{such that}~x\ge y\}
\]
be the comprehensive hull of $A$, that is, the smallest comprehensive set including $A$ (within $\mathbb{R}_{++}$).

We introduce several notations for convenience. A permutation is a bijection on $N$. Let $\Pi$ be the set of permutations. For $x\in \mathbb{R}^n $ and $\pi \in \Pi$, define $x_\pi$ as the vector $(x_{\pi (1)}, x_{\pi (2)}, \ldots , x_{\pi (n)} ) \in \mathbb{R}^n$. 
A set $S \subseteq \mathbb{R}^n $ is said to be \textit{symmetric} if $S = \{ x_\pi \mid x\in S \}$ for all $\pi\in \Pi$.   
For $a, x\in \mathbb{R}^n$, let $a\ast x = (a_1x_1, a_2x_2, \ldots, a_n x_n)$. 
Similarly, for $a\in \mathbb{R}^n$ and $S\subseteq \mathbb{R}^n$, let $a\ast S = \{ a\ast x \mid x\in S \}$.

A \textit{(bargaining) solution} $F$ assigns a nonempty subset $F (S)$ of $S$ to each problem $S\in \mathcal{B}$. 
Let $\Delta_N \subseteq \mathbb{R}^n_+$ be the set of weights over the players. 
For $w\in \Delta_N$, the \textbf{weighted Nash solution associated with} $w$, denoted by $F^w$, is the solution such that for all $S\in \mathcal{B}$, 
\begin{equation}
    F^w (S) = \argmax_{x\in S} \prod_{i\in N} x_i^{w_i}. 
\end{equation}
A special case with an equal-weight vector (i.e., $w_i=w_j$ for all $i,j\in N$) is the \textbf{Nash solution} $F^\text{N}$.
That is, for all $S\in \mathcal{B}$, 
\begin{equation}
    F^\text{N}(S) = \argmax_{x\in S} \prod_{i\in N} x_i. 
\end{equation}
In other words, the Nash solution chooses the maximizers of the product of the player's utility levels for each problem. 
The following are standard axioms and their variants that the Nash solution satisfies in our setup.

\begin{description}
    \item[\bf Arrow Axiom.] For all $S, S'\in \mathcal{B}$ with $S\subseteq S'$, if $F(S') \cap S \neq \emptyset$, then $F(S) = F(S')\cap S$. 

    \item[\bf Efficiency.] For all $S\in \mathcal{B} $ and $x\in F(S)$, there is no $y \in S$ such that $y \gg x$. 
    Furthermore, for all $x'\in S$, if $x'\geq y'$ for some $y'\in F(S)$, then $x'\in F(S)$. 

    \item[\bf Anonymity.] For all symmetric problems $S\in \mathcal{B}$ and $x\in S$, if $x\in F(S)$, then $x_\pi \in F(S)$ for all $\pi \in \Pi$. 

    \item[\bf Scale Invariance.] For all $S\in \mathcal{B}$, $a\in\mathbb{R}^n_{++}$, and $x\in S$, we have $x\in F(S) $ if and only if $ a \ast x\in F(a \ast S)$. 

    \item[\bf Continuity.] For all $S\in \mathcal{B}$ and $x\in S$, if there exist $\{S^k \}_{k\in \mathbb{N}} \subseteq \mathcal{B}$ and $\{ x^k\}_{k\in \mathbb{N}}\subseteq \mathbb{R}_{++}^n$ such that (i)  $x^k \in F(S^k)$ for all $k\in \mathbb{N}$; (ii) $\{S^k \}_{k\in \mathbb{N}}$ converges to $S$ in the Hausdorff topology; and (iii) $\{ x^k\}_{k\in \mathbb{N}}$ converges to $x$ in $\mathbb{R}_{++}^n$, then $x\in F(S)$.
\end{description}

\textit{Arrow axiom} \citep{arrow1959rational} is a set-valued version of Nash's \textit{independence of irrelevant alternatives (IIA)}. 
Under the usual interpretation, this axiom requires that if a feasible set becomes smaller, then the set of chosen outcomes should equal the restriction of the originally chosen set. 
In \textit{efficiency}, we require that any chosen outcome should not be strictly Pareto dominated, and that if an outcome is better than some chosen outcome for every player, then it should also be chosen. 
This ensures that 
at least one (strictly) Pareto efficient outcome is chosen.
Interpretations of the remaining axioms are standard.

\section{Rationalizabiliy}
\label{sec_rat}
In the classical characterization of the Nash solution, IIA plays a crucial role and provokes enormous discussions about the Nash solution. 
For the set-valued version of the Nash solution, \textit{Arrow axiom} is a natural counterpart and plays the same crucial role \citep{kaneko1980extension,xu2006alternative}.
It is known that this axiom can be decomposed into the following two parts:

 \begin{description}
    \item[\bf Chernoff Axiom.] For all $S, S'\in \mathcal{B}$ with $S\subseteq S'$, if $F(S') \cap S \neq \emptyset$, then $F(S) \supseteq  F(S') \cap S$.

 \item[\bf Dual Chernoff Axiom.] For all $S, S'\in \mathcal{B}$ with $S\subseteq S'$, if $F(S')\cap S \neq \emptyset$, then $F(S) \subseteq  F(S')\cap S$.
\end{description}

Based on this decomposition, the requirements of \textit{Arrow axiom} can be understood as follows.
Suppose that a choice set $S$ expands to $S'$ and the rule still chooses some outcomes from the original problem $S$ (that is, $F(S') \cap S\neq \emptyset$).  Then, the first condition requires that the choice from the expanded problem $S'$ that was available in the original problem should only be taken from the original chosen outcomes (i.e., \textit{Chernoff axiom}); and the second condition requires that all the chosen outcomes in the original problem $S$ must remain chosen (i.e., \textit{dual Chernoff axiom}).\footnote{The Chernoff axiom and Dual Chernoff axiom are also called Sen's $\alpha$ and $\beta$, respectively \citep{sen1970collective}.}


In the literature, the first requirement has been criticized, and several alternative axioms have been introduced to propose more appropriate solution concepts.
Here, however, we focus on the second condition and argue that it is not free from criticisms, too.

To illustrate this, we consider the following situations.
Suppose that $a$ and $b$ are available in a choice set $S$ and that both options are chosen because they are incomparable. 
Now, imagine that another option $c$ becomes available by expanding the choice set to the problem $S'$, and $c$ is more appealing than $a$.
We also assume that there is still no clear relationship between $b$ and $c$.
Then, choosing $b$ and $c$ but not $a$ in $S$ would be a natural decision.
However, \textit{dual Chernoff axiom} does exclude such choices. 

In a social decision context, this intuition requires us to agree with the following scenarios: Suppose that outcome $E$ is equitable but offers relatively low utility levels to individuals, whereas outcome $U$ is unequal but provides a sufficiently high total utility. 
One problem with taking $U$ is that the redistribution of utility among individuals is not easily achieved due to political or ideological reasons.
Now, suppose that a more appealing option $E'$ than $E$ becomes feasible.
For example, $E'$ is an outcome in which the utility of every individual is higher than that under $E$, though its total utility is still lower than that under $U$.
In such a situation, our common intuition suggests that $U$ and $E'$ should be selected from the expanded choice set and $E$ should no longer be chosen. 
However, if we accept \textit{Arrow axiom}, particularly \textit{dual Chernoff axiom}, such a decision must be ruled out.

To avoid these counterintuitive conclusions stemming from \textit{dual Chernoff axiom}, we consider the following weaker requirements:

\begin{description}
    \item[Weak Dual Chernoff Axiom.] For all $S, T\in\mathcal{B}$, $F(S)\cap F(T) \subseteq F(S \cup T)$.\footnote{This axiom is also called Sen's $\gamma$ \citep{sen1971choice}, which is sometimes written as follows: For any $\{S^k\}_{k=1}^K \subseteq \mathcal{B}$, $\bigcap_{k=1}F(S^k) \subseteq F(\bigcup_{k=1}^KS^k)$. By induction, the equivalence of both axioms can be shown, so that we can use the present axiom without loss of generality.} 
\end{description}

\begin{description}
    \item[\bf Independence of Irrelevant Expansions (IIE).] For all $S, S'\in \mathcal{B}$ with $S\subseteq S'$, if $F(S')\subseteq S$, then $F(S) \subseteq  F(S')$. 
\end{description}

\textit{Weak dual Chernoff axiom} is already known in the literature on social choice theory. 
It requires that, for any problem $S$ and $T$, the commonly chosen outcomes in both problems should also be chosen in the combined problem $S \cup T$.
This axiom is indeed weaker than \textit{dual Chernoff axiom} because the latter implies both $F(S)\subseteq F(S\cup T)$ and $F(T)\subseteq F(S\cup T)$.
Note that this axiom can accommodate the intuitive decision in the above scenarios.
\textit{IIE} is an axiom about problem expansions.\footnote{\citet{panda1983non} calls this axiom Sen's $\delta^*$, which is a stronger version of Sen's $\delta$ \citep{sen1971choice}. \citet{cato2018choice} calls this axiom weak Nash axiom $\beta$.} 
It requires that if the expansion is irrelevant in the sense that all the outcomes chosen in the expanded problem $S'$ can be taken in the original problem $S$, then the outcomes chosen in $S$ should be chosen in $S'$ again. 
This axiom avoids the above criticism of \textit{Arrow axiom} by strengthening the prerequisite of \textit{dual Chernoff axiom}. 

We adopt \textit{Chernoff axiom}, \textit{weak dual Chernoff axiom}, and \textit{IIE}, and examine their implications under the aforementioned standard axioms.
We first show that the implications of the combination of \textit{Chernoff axiom} and \textit{weak dual Chernoff axiom} are summarized as follows:

\begin{proposition}\label{prop_weakArrow}
A bargaining solution satisfies \textit{Chernoff axiom} and \textit{weak dual Chernoff axiom} if and only if it satisfies the following property:
\begin{description}
\item[]\centering  For all $S, T\in\mathcal{B}$, ~ $F(S)\cap F(T) = F(S \cup T) \cap S\cap T$.
\end{description}
\end{proposition}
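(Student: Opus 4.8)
The plan is to prove the equivalence by direct set-theoretic manipulation, relying only on the elementary observation that $\mathcal{B}$ is closed under finite unions: if $S,T\in\mathcal{B}$, then $S\cup T$ is compact relative to $\mathbb{R}^n_{++}$ and comprehensive, hence $S\cup T\in\mathcal{B}$, so the displayed identity and the axioms are all meaningful for the triple $S$, $T$, $S\cup T$. I will also use without comment that $F(S)\subseteq S$ for every $S\in\mathcal{B}$, which is part of the definition of a solution.

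For the ``only if'' direction, fix $S,T\in\mathcal{B}$ and prove the two inclusions separately. For $F(S)\cap F(T)\subseteq F(S\cup T)\cap S\cap T$: any $x$ in the left-hand side lies in $S\cap T$ since $x\in F(S)\subseteq S$ and $x\in F(T)\subseteq T$, and lies in $F(S\cup T)$ by \textit{weak dual Chernoff axiom} applied to $S$ and $T$. For the reverse inclusion, take $x\in F(S\cup T)\cap S\cap T$. Since $x\in F(S\cup T)\cap S$ and $S\subseteq S\cup T$, the hypothesis $F(S\cup T)\cap S\neq\emptyset$ of \textit{Chernoff axiom} holds, so $F(S)\supseteq F(S\cup T)\cap S\ni x$; by the symmetric argument with $T$ in place of $S$ we also get $x\in F(T)$, hence $x\in F(S)\cap F(T)$.

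For the ``if'' direction, assume the identity $F(S)\cap F(T)=F(S\cup T)\cap S\cap T$ for all $S,T\in\mathcal{B}$. Then \textit{weak dual Chernoff axiom} is immediate, since $F(S)\cap F(T)=F(S\cup T)\cap S\cap T\subseteq F(S\cup T)$. For \textit{Chernoff axiom}, let $S\subseteq S'$ and specialize the identity to the pair $(S,S')$: because $S\cup S'=S'$ and $S\cap S'=S$, it reduces to $F(S)\cap F(S')=F(S')\cap S$, which gives $F(S')\cap S\subseteq F(S)$ — precisely the conclusion of \textit{Chernoff axiom} (it holds even without invoking the nonemptiness hypothesis).

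The argument is short, so I do not expect a genuine obstacle; the only point that needs to be stated carefully is the preliminary closure of $\mathcal{B}$ under unions, after which both directions are routine inclusion chases that invoke \textit{Chernoff axiom} and \textit{weak dual Chernoff axiom} exactly once each.
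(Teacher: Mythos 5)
Your proof is correct and follows essentially the same route as the paper's: weak dual Chernoff plus $F(S)\subseteq S$ gives one inclusion, Chernoff applied to $S\subseteq S\cup T$ and $T\subseteq S\cup T$ gives the other, and the converse specializes the identity to nested pairs for Chernoff and drops the $S\cap T$ term for weak dual Chernoff. The only cosmetic difference is that you make explicit the closure of $\mathcal{B}$ under unions and handle the nonemptiness hypothesis elementwise rather than by cases, neither of which changes the argument.
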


We call the new property derived in Proposition \ref{prop_weakArrow} \textbf{\textit{weak Arrow axiom}}. 
As discussed above, we regard the combination of \textit{weak Arrow axiom} and \textit{IIE} as a weakened counterpart of \textit{Arrow axiom} and consider implications of the axioms for bargaining solutions.
For a binary relation $\succsim$ over $D\in \{\mathbb{R}^n_{++}, \mathbb{R}^n\}$, let $\succ$ denote its asymmetric part. 
A binary relation $\succsim$ is said to be \textit{quasi-transitive} if for all $x,y,z,\in D$, $x\succ y$ and $y\succ z$ imply $x\succ z$; \textit{monotone} if for all $x,y\in D$, $x\succ y$ whenever $x \gg y$; and \textit{continuous} if for each $x\in D$, $\{ y\in D \mid y\succ x\}$ is open. 
\textit{Weak Arrow axiom} and \textit{IIE} have the following implications in our context.

\begin{lemma}\label{prop_rationalization}
Suppose that a solution $F$ satisfies \textit{efficiency} and \textit{continuity}.
Then, the following two statements hold.

\begin{itemize}
\item[(1)] $F$ satisfies 
\textit{Arrow axiom} if and only if it is \textit{rationalizable}, that is, there exists a complete, transitive, monotone, and continuous binary relation $\succsim$ such that
\[
F(S)=\{x \in S\mid x \succsim y, \forall y \in S\},~~\forall S \in \mathcal{B}.
\]

\item[(2)] $F$ satisfies \textit{weak Arrow axiom} and \textit{IIE} if and only if it is \textit{weakly rationalizable}, that is, there exists a quasi-transitive, monotone, and continuous binary relation $\succsim$ such that
\[
F(S)=\{x \in S\mid \not \exists y \in S ~~\text{s.t.} ~~ y\succ x\},~~\forall S \in \mathcal{B}.
\]

\end{itemize}

\end{lemma}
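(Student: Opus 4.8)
The plan is to prove each direction of both equivalences by constructing the rationalizing relation from the solution $F$ via the standard revealed-preference route, and conversely checking that a (weakly) rationalizable $F$ inherits the relevant axioms. For part (1), the ``if'' direction is routine: if $F$ is rationalized by a complete, transitive, monotone, continuous $\succsim$, then for $S\subseteq S'$ with $F(S')\cap S\neq\emptyset$, any $x\in F(S')\cap S$ is a $\succsim$-maximizer over $S'$, hence over $S$, and conversely any $\succsim$-maximizer over $S$ ties with $x$ and is therefore also maximal over $S'$ — giving $F(S)=F(S')\cap S$, i.e. \textit{Arrow axiom}. The ``only if'' direction is the classical construction: define $x\succsim y$ whenever $x\in F(\mathrm{cmp}\{x,y\})$ (or more robustly, whenever there is some $S$ containing both with $x\in F(S)$ and $y\in S$); then use \textit{Arrow axiom} to show this is well-defined and independent of the witnessing set, completeness from nonemptiness of $F$ on two-point comprehensive hulls, transitivity from \textit{Arrow axiom} applied to nested unions, monotonicity and continuity from \textit{efficiency} and \textit{continuity} of $F$ respectively, and finally that $\succsim$ recovers $F$ using \textit{Chernoff} (for the $\subseteq$ inclusion $F(S)\subseteq$ maximizers) and \textit{dual Chernoff}/the full \textit{Arrow axiom} together with compactness (for the reverse inclusion, since a maximizer of $\succsim$ over $S$ must lie in $F$ of the comprehensive hull of $S$ with any chosen point, hence in $F(S)$).

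For part (2), the structure is parallel but with the asymmetric part $\succ$ playing the leading role. For the ``only if'' direction I would define $x\succ y$ to mean: $x$ is ``revealed strictly better'' than $y$, e.g. $x\in F(S)$ and $y\in S\setminus F(S)$ for some $S\in\mathcal B$, or equivalently (and more tractably) $x\in F(\mathrm{cmp}\{x,y\})$ while $y\notin F(\mathrm{cmp}\{x,y\})$. The \textit{weak Arrow axiom} from Proposition \ref{prop_weakArrow} is exactly what is needed to make this strict relation coherent: it prevents $x\succ y$ and $y\succ x$ from both being revealed via different menus, and it lets one transfer strict-dominance facts between a menu and unions containing it. Quasi-transitivity of $\succ$ should follow by taking $S=\mathrm{cmp}\{x,y,z\}$ and combining \textit{weak Arrow axiom} with \textit{efficiency}: if $y\notin F(\mathrm{cmp}\{x,y\})$ and $z\notin F(\mathrm{cmp}\{y,z\})$, then applying the identity $F(S)\cap F(T)=F(S\cup T)\cap S\cap T$ to suitable $S,T$ forces $z\notin F(\mathrm{cmp}\{x,z\})$. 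Monotonicity of $\succ$ is immediate from the second clause of \textit{efficiency}, and continuity of $\succ$ (openness of $\{y: y\succ x\}$) comes from \textit{continuity} of $F$ via a limiting argument. The key role of \textit{IIE} is in showing $F(S)=\{x\in S : \neg\exists y\in S,\ y\succ x\}$: the inclusion $\subseteq$ uses \textit{Chernoff} (contained in \textit{weak Arrow axiom}) to rule out any $y\in S$ with $y\succ x$ for $x\in F(S)$; the reverse inclusion $\supseteq$ is where \textit{IIE} enters, since if $x$ is $\succ$-maximal in $S$ but $x\notin F(S)$, one builds a smaller problem $S'$ (the comprehensive hull of $F(S)\cup\{x\}$, or an approximating sequence thereof) on which $x$ would have to be chosen by maximality, then \textit{IIE} applied to $S'\subseteq S$ (noting $F(S)\subseteq S'$) forces $F(S)\subseteq F(S')$, contradicting $x\notin F(S)$ once one checks $x\in F(S')$. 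The ``if'' direction — a weakly rationalizable $F$ satisfies \textit{weak Arrow axiom} and \textit{IIE} — is a direct verification: \textit{weak Arrow axiom} because $\succ$-maximality over $S\cup T$ restricted to $S\cap T$ coincides with being $\succ$-maximal over both $S$ and $T$, and \textit{IIE} because if every $\succ$-maximizer over $S'$ lies in $S$, then no point of $S$ is strictly $\succ$-dominated within $S'$ beyond what happens in $S$, so maximality over $S$ implies maximality over $S'$.

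The main obstacle I anticipate is handling the reverse inclusion ``maximizers $\subseteq F(S)$'' in both parts without completeness of the domain of menus — i.e. the fact that $\mathcal B$ consists only of compact comprehensive subsets of $\mathbb R^n_{++}$, so the ``two-point menu'' $\{x,y\}$ is not itself admissible and must be replaced by $\mathrm{cmp}\{x,y\}$, and more seriously, a $\succ$-maximizer $x$ of an infinite set $S$ need not be exposed by any finite sub-menu. This forces a compactness/continuity argument: one approximates $S$ from inside by problems of the form $\mathrm{cmp}(\{x\}\cup F(S))$ or by finite comprehensive hulls converging to $S$, applies the axioms there, and passes to the limit using \textit{continuity} of $F$. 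Getting this limiting step right — in particular ensuring the approximating problems stay within $\mathbb R^n_{++}$ and that the chosen points converge to $x$ rather than drifting to the boundary — is the delicate part; the rest is bookkeeping with the Chernoff/weak-Arrow identities. A secondary subtlety is verifying that in part (2) the constructed $\succ$ is genuinely the asymmetric part of some relation $\succsim$ and that quasi-transitivity (not full transitivity) is all one can and needs to extract, which is precisely why \textit{weak Arrow axiom} rather than \textit{Arrow axiom} is the right hypothesis.
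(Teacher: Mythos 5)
Your overall architecture matches the paper's: define the revealed relation $x\succsim^F y \iff x\in F(\mathrm{cmp}\{x,y\})$, get completeness from \textit{efficiency}, monotonicity and continuity from \textit{efficiency} and \textit{continuity}, and recover $F$ on infinite menus by approximating $S$ with finitely generated comprehensive hulls and passing to the limit. Part (1) and the ``if'' directions are fine. But there is a genuine gap in part (2): you claim quasi-transitivity of $\succ^F$ follows from \textit{weak Arrow axiom} and \textit{efficiency} alone, reserving \textit{IIE} for the recovery step $F(S)=\{x\in S\mid \nexists y\in S,\ y\succ^F x\}$. This is exactly backwards relative to where the axioms actually bite. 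The weak Arrow identity $F(S)\cap F(T)=F(S\cup T)\cap S\cap T$ can never \emph{expel} an element from the choice set of a sub-problem on the ground that it is unchosen in a super-problem --- that exclusionary force is precisely the \textit{dual Chernoff} content that has been deliberately weakened away. Concretely, a three-alternative choice correspondence with $C(\{x,y\})=\{x\}$, $C(\{y,z\})=\{y\}$, $C(\{x,z\})=\{x,z\}$, $C(\{x,y,z\})=\{x\}$ satisfies Chernoff and weak dual Chernoff but has $x\succ y$, $y\succ z$, $x\not\succ z$; it fails only \textit{IIE} (applied to $\{x,z\}\subseteq\{x,y,z\}$). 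So no ``suitable $S,T$'' in the weak Arrow identity will force $z\notin F(\mathrm{cmp}\{x,z\})$.

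The paper's Step 3 shows how \textit{IIE} is actually used there: after weak Arrow yields $y,z\notin F(\mathrm{cmp}\{x,y,z\})$ and hence $x\in F(\mathrm{cmp}\{x,y,z\})\subseteq F(\mathrm{cmp}\{x,z\})$, one supposes $z\in F(\mathrm{cmp}\{x,z\})$; the contrapositive of \textit{IIE} (with small set $\mathrm{cmp}\{x,z\}$ inside $\mathrm{cmp}\{x,y,z\}$) then produces some $w\in F(\mathrm{cmp}\{x,y,z\})\setminus\mathrm{cmp}\{x,z\}$, which must satisfy $w\le y$, and \textit{efficiency} forces $y\in F(\mathrm{cmp}\{x,y,z\})$, a contradiction. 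Conversely, the recovery step where you invoke \textit{IIE} needs no \textit{IIE} at all: the paper gets the reverse inclusion purely from \eqref{eq:pr_weakrat1}-type pairwise facts, an induction via weak dual Chernoff ($F(S)\cap F(T)\subseteq F(S\cup T)$) along finitely generated hulls converging to $S$, and \textit{continuity}. Your alternative \textit{IIE}-based route for that inclusion is also circular as sketched, since establishing $x\in F(S')$ for $S'=\mathrm{cmp}(F(S)\cup\{x\})$ presupposes the very implication (maximality implies membership) you are proving; your fallback to the approximation argument is the right fix, but the quasi-transitivity step needs to be redone with \textit{IIE} as above.
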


In the finite-alternative case, \citet{arrow1959rational} established the well-known characterization of rationalizability by \textit{Arrow axiom}, and \citet{tyson2008cognitive} provided the corresponding characterization of weak rationalizability by axioms corresponding to \textit{weak Arrow axiom} and \textit{IIE} (see also Sen (1971) for another characterization).
The first statement in Lemma \ref{prop_rationalization} is a minor modification of \citeauthor{ok1999revealed}'s (\citeyear{ok1999revealed}) extension of Arrow’s result from the finite setting to the non-convex bargaining problem. 
In parallel, the second statement extends \citeauthor{tyson2008cognitive}’s (\citeyear{tyson2008cognitive}) characterization of weak rationalizability to the non-convex bargaining domain. 
As in \citet{ok1999revealed}, the additional axioms of \textit{efficiency} and \textit{continuity} are imposed to address the complexity stemming from the difference in domains.


\section{The coarse Nash solutions}
\label{sec_main}

This section examines the implications of the axioms introduced in the previous section by providing a representation theorem for them. 

Before considering \textit{weak Arrow axiom} and \textit{IIE}, we begin with the relationship between Lemma \ref{prop_rationalization} (1) and the Nash solution. 
Lemma \ref{prop_rationalization} (1) states that if a solution satisfies \textit{Arrow axiom} and several regularity axioms, it can be rationalized by a complete, transitive, monotone, and continuous binary relation $\succsim$ over $\mathbb{R}^n_{++}$.
We now consider the binary relation $\succsim^*$ over $\mathbb{R}^n$ defined by 
\begin{equation*}
    x\succsim^* y \iff (e^{x_1}, \ldots, e^{x_n}) \succsim (e^{y_1}, \ldots, e^{y_n}). 
\end{equation*}
In other words, for all $x,y\in \mathbb{R}^n_{++}$, we have 
\[
\log x \succsim^*\log y \iff x \succsim y, 
\]
where we denote $(\log x_1, \ldots, \log x_n)$ and $(\log y_1, \ldots, \log y_n)$ by $\log x$ and $\log y$, respectively. 
By construction, $\succsim^*$ also satisfies completeness, transitivity, monotonicity, and continuity.
If we impose \textit{scale invariance} on $F$, we can see that $\succsim^*$ satisfies additivity.\footnote{Formally, for any $\log x, \log y, a \in \mathbb{R}^n$, $\log x \succsim^* \log y \Leftrightarrow \log x+a \succsim^* \log y+a$.}
Therefore, de Finetti's theorem (cf. Chapter 9 in \citealp{gilboa2009theory}) implies that there exists $w\in \Delta_N$ such that for all $x,y\in \mathbb{R}^n_{++}$,  
\[
\log x \succsim^* \log y \iff (\log x-\log y) \succsim^* \textbf{0} \iff \sum_{i \in N} w_i (\log x_i- \log y_i) \ge 0.
\]
That is, for any $S \in \mathcal{B}$, $x \in F(S)$ if and only if there is no $y\in S$ such that  
\begin{equation}
\label{eq:nash_alt}
    \log y -\log x \in \qty{z \in \mathbb{R}^n \mid  \sum_{i \in N} w_i z_i \ge 0 }
\end{equation} 
(Figure \ref{fig:half}). 
Note that this is an alternative representation of the weighted Nash solution $F^w$, as we can check immediately. 
Here, if we regard a solution as an arbitrator's choice rule, then \eqref{eq:nash_alt} can be interpreted as follows: 
The benefit of the change from $x$ to $y$ is measured by the difference between their log-transformed vectors, and this change is regarded as a social improvement only if the difference is in the set  $\{z \in \mathbb{R}^n \mid \sum_{i \in N} w_i z_i \ge 0\}$. 
The weighted Nash solution $F^w$ assigns to each problem the set of vectors that cannot be improved in this sense. 
If we consider a solution as the prediction of bargaining outcomes, whether the difference in the log-transformed vectors is in the set $\{z \in \mathbb{R}^n \mid \sum_{i \in N} w_i z_i \ge 0\}$ can be a criterion for determining which vector is more appropriate as a prediction. 

\begin{figure}
    \centering
    \tikzset{every picture/.style={line width=0.75pt}}        
    \tikzset{every picture/.style={line width=0.75pt}}        

    \begin{tikzpicture}[x=0.75pt,y=0.75pt,yscale=-1,xscale=1]
    
    \draw (211,159) -- (450.5,159.12);
    \draw [shift={(450.5,159.12)}, rotate = 180.03] (10.93,-3.29) .. controls (6.95,-1.4) and (3.31,-0.3) .. (0,0) .. controls (3.31,0.3) and (6.95,1.4) .. (10.93,3.29);
    
    \draw (330,275.12) -- (330,38.12);
    \draw [shift={(330,38.12)}, rotate = 90] (10.93,-3.29) .. controls (6.95,-1.4) and (3.31,-0.3) .. (0,0) .. controls (3.31,0.3) and (6.95,1.4) .. (10.93,3.29);
    
    \draw (268.02,37.93) -- (391,277);
    
    \path[fill=gray!50, fill opacity=0.4] 
        (391,277) -- (450.44,277.02) -- (450.57,38.34) -- (268.02,38.31) -- cycle;
    
    \draw (315,163.4) node [anchor=north west][inner sep=0.75pt] {$0$};
    \draw (346,171.4) node [anchor=north west][inner sep=0.75pt] [font=\fontsize{0.59em}{0.71em}\selectfont] {$\left\{z\in \mathbb{R}_{++}^{n} \mid \sum _{i\in N} w_{i} z_{i} >0\right\}$};
    \draw (337,42.4) node [anchor=north west][inner sep=0.75pt] {$z_{2}$};
    \draw (435,140) node [anchor=north west][inner sep=0.75pt] {$z_{1}$};
    
    \end{tikzpicture}

    \caption{The set $\left\{z\in \mathbb{R}_{++}^{n} \mid \sum _{i\in N} w_{i} z_{i} >0\right\}$ for some $w\in \Delta_N$}
    \label{fig:half}
\end{figure}
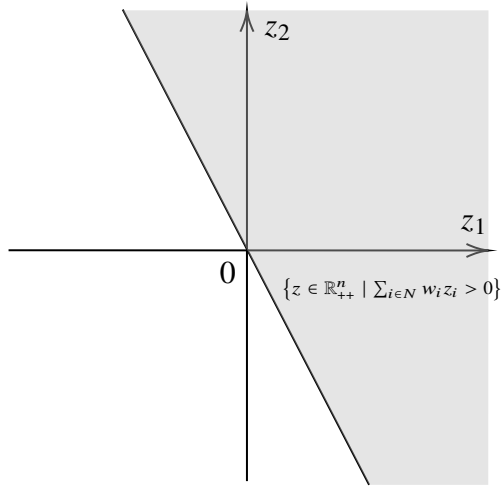

Conceptually, the right-hand side of \eqref{eq:nash_alt} need not be the set $\{z \in \mathbb{R}^n \mid \sum_{i \in N}w_i z_i \ge 0\}$; we can take other sets to determine whether a social improvement is achieved or not. 
Our main result shows that if we replace \textit{Arrow axiom} with \textit{weak Arrow axiom} and \textit{IIE}, then this set can be generalized, and the Nash solution can be modified to accommodate choice with some ``buffer.''

To provide a formal definition of these solutions, we introduce the following new concept. 
We say that a set $A \subseteq \mathbb{R}^n$ is an \textbf{\textit{improving set}} if $A$ is an open set such that  (i) $\mathbb{R}^n_{++} \subseteq A \subseteq \mathbb{R}^n \backslash \mathbb{R}^n_{-}$ and (ii) for all $x,y\in A$, $x + y \in A$.
The following is a new class of solutions parameterized by improving sets, which is the main focus of this paper. 

\begin{definition}
    For an improving set $A\subseteq \mathbb{R}^n$, a solution is a \textbf{\textit{coarse Nash solution associated with}} $A$, denoted by $F^A$, if for all $S\in \mathcal{B}$, 
    \begin{equation}
        F^A (S) = \{ x\in S \mid \nexists y \in S~~ \text{s.t. } \log y - \log x \in A \}.
    \end{equation}
\end{definition}

In the definition of coarse Nash solutions, the condition (i) of improving sets corresponds to the monotonicity of the dominance relation, while the condition (ii) means that two-stage improvements are also deemed as
desirable changes.
 
We then present our main result. 
The following theorem shows that the coarse Nash solutions characterize all the implications of \textit{weak Arrow axiom}, \textit{IIE}, and other basic axioms.

\begin{theorem}
\label{thm:main}
     A solution $F$ satisfies \textit{weak Arrow axiom}, \textit{IIE}, \textit{efficiency},  \textit{scale invariance},
     and \textit{continuity} if and only if it is a coarse Nash solution.
\end{theorem}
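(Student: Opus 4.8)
The plan is to prove both implications, with essentially all of the work in the ``only if'' direction. For ``if'', I would fix an improving set $A$, write $y\succ^{A}x$ for $\log y-\log x\in A$, and check the axioms directly. Conditions (i)--(ii) make $\succ^{A}$ a strict partial order --- irreflexive since $\mathbf 0\notin A$, transitive since $A+A\subseteq A$ --- with open upper-contour sets since $A$ is open; hence every problem has a $\succ^{A}$-maximal element, so $F^{A}(S)=\{x\in S:x\text{ is }\succ^{A}\text{-maximal in }S\}$ is a well-defined solution. \textit{Scale invariance} is immediate from $\log(a\ast y)-\log(a\ast x)=\log y-\log x$; \textit{efficiency} uses $\mathbb R^{n}_{++}\subseteq A$ for the weak-Pareto part and $A+\mathbb R^{n}_{+}\subseteq A$ (a consequence of $A$ open and $A+A\subseteq A$) for the ``furthermore'' clause; \textit{continuity} uses openness of $A$; the \textit{weak Arrow axiom} is purely set-theoretic; and \textit{IIE} follows after recalling the standard fact that in a compact set every point lies below a maximal element of a strict partial order with open sections.

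For ``only if'', assume $F$ satisfies the five axioms. I would first invoke Lemma \ref{prop_rationalization}(2) to obtain a quasi-transitive (so $\succ$ transitive), monotone, continuous $\succ$ on $\mathbb R^{n}_{++}$ with $F(S)=\{x\in S:\nexists y\in S,\ y\succ x\}$, and then introduce the auxiliary relation $q\,R\,p\iff p\notin F(\text{cmp}\{p,q\})$. Since $R$ is defined through $F$ alone, \textit{scale invariance} gives $q\,R\,p\iff(a\ast q)\,R\,(a\ast p)$ for all $a\in\mathbb R^{n}_{++}$. The two facts I would prove are: (a) $F(S)=\{x\in S:\nexists y\in S,\ y\,R\,x\}$ for every $S$ --- ``$\subseteq$'' by the \textit{Chernoff axiom} applied to $\text{cmp}\{x,y\}\subseteq S$, and ``$\supseteq$'' because $z\in S$ with $z\succ x$ forces $x\notin F(\text{cmp}\{x,z\})$; and (b) $R$ is transitive --- given $q\,R\,p$ and $r\,R\,q$, the \textit{Chernoff axiom} yields $p\notin F(\text{cmp}\{p,q,r\})$ and also $F(\text{cmp}\{p,q,r\})\subseteq\text{cmp}\{p,r\}$, the latter because any $w\in F(\text{cmp}\{p,q,r\})$ with $w\le q$ would, via the \textit{Chernoff axiom} and then the ``furthermore'' clause of \textit{efficiency} applied with $q\ge w$, force $q\in F(\text{cmp}\{q,r\})$, contradicting $r\,R\,q$; then \textit{IIE} applied to $\text{cmp}\{p,r\}\subseteq\text{cmp}\{p,q,r\}$ gives $p\notin F(\text{cmp}\{p,r\})$, i.e. $r\,R\,p$.

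With this I would set $A:=\{\log q-\log p:q\,R\,p\}$, which is well-defined by scale invariance of $R$ (it depends only on the ratio $q/p$); then $y\,R\,x\iff\log y-\log x\in A$, so $F=F^{A}$ by (a). To see $A$ is an improving set: it is open since $A=\{z:\mathbf 1\notin F(\text{cmp}\{\mathbf 1,e^{z}\})\}$, the graph of $F$ is closed by \textit{continuity}, and $z\mapsto\text{cmp}\{\mathbf 1,e^{z}\}$ is Hausdorff-continuous; $\mathbb R^{n}_{++}\subseteq A$ since $z\gg\mathbf 0$ makes $e^{z}$ strictly dominate $\mathbf 1$ in $\text{cmp}\{\mathbf 1,e^{z}\}$, so \textit{efficiency} excludes $\mathbf 1$; $A\subseteq\mathbb R^{n}\setminus\mathbb R^{n}_{-}$ since $z\le\mathbf 0$ makes $\text{cmp}\{\mathbf 1,e^{z}\}=\text{cmp}\{\mathbf 1\}$, where \textit{efficiency} forces $\mathbf 1\in F$; and $A+A\subseteq A$ since for $z,z'\in A$ well-definedness of $A$ gives $e^{z+z'}\,R\,e^{z'}$ (represent $z$ by the pair $(e^{z+z'},e^{z'})$) and $e^{z'}\,R\,\mathbf 1$, so transitivity of $R$ gives $e^{z+z'}\,R\,\mathbf 1$, i.e. $z+z'\in A$.

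The step I expect to be the main obstacle is fact (b), equivalently the two-stage improvement property $A+A\subseteq A$. In contrast to the classical Arrow case, the weak rationalizer supplied by Lemma \ref{prop_rationalization}(2) need not be scale-covariant, nor even weakly monotone, so one cannot transport \textit{scale invariance} through $\succ$ and must instead work with the $F$-defined relation $R$, extracting transitivity from the interplay of the \textit{Chernoff axiom}, the ``furthermore'' clause of \textit{efficiency}, and \textit{IIE} on the three-point hull $\text{cmp}\{p,q,r\}$. A secondary technicality is verifying the Hausdorff-continuity of $z\mapsto\text{cmp}\{\mathbf 1,e^{z}\}$, hence openness of $A$, for sets that are only compact relative to $\mathbb R^{n}_{++}$.
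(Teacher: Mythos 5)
Your proposal is correct and follows essentially the same route as the paper: invoke Lemma \ref{prop_rationalization}(2), work with the canonical relation $x\succsim y \iff x\in \nolinebreak F(\text{cmp}\{x,y\})$ (your $R$ is exactly its asymmetric part), transfer \textit{scale invariance} to it, define $A$ as the set of log-differences, and verify the axioms directly for the converse. Your re-derivation of the transitivity of $R$ via the Chernoff axiom, the ``furthermore'' clause of \textit{efficiency}, and \textit{IIE} on $\text{cmp}\{p,q,r\}$ reproduces Step 3 of the paper's proof of the lemma, so the extra caution about which rationalizer to use (the paper simply takes the one constructed in the lemma's proof, which is scale-covariant by construction) changes nothing of substance.
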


If \textit{anonymity} is additionally imposed, the improving set becomes symmetric, as stated in the following corollary: 

\begin{corollary}
\label{thm:main_anonymity}
     A solution $F$ satisfies \textit{weak Arrow axiom}, \textit{IIE}, \textit{efficiency},  \textit{scale invariance}, \textit{anonymity}, 
     and \textit{continuity} if and only if it is a coarse Nash solution associated with a symmetric improving set. 
\end{corollary}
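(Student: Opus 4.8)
By Theorem~\ref{thm:main}, any $F$ satisfying \textit{weak Arrow axiom}, \textit{IIE}, \textit{efficiency}, \textit{scale invariance}, and \textit{continuity} coincides with a coarse Nash solution $F^{A}$ for some improving set $A$, and conversely every such $F^{A}$ satisfies those five axioms. Hence the corollary reduces to the claim that, for an improving set $A$, the solution $F^{A}$ satisfies \textit{anonymity} if and only if $A$ is symmetric, i.e.\ $\{\,z_{\pi}\mid z\in A\,\}=A$ for all $\pi\in\Pi$. It is convenient first to record that every improving set $A$ is upward closed ($x\in A$ and $y\ge x$ imply $y\in A$; equivalently $\mathbb{R}^{n}\setminus A$ is downward closed): indeed $A+\mathbb{R}^{n}_{+}$ is again an improving set and $F^{A+\mathbb{R}^{n}_{+}}=F^{A}$, because comprehensiveness of every problem lets one replace a dominator $y$ with $\log y-\log x\in A+p$, $p\ge\mathbf 0$, by $e^{\log y-p}\le y$; since the improving set representing a given coarse Nash solution is uniquely determined (which can be checked directly using the problems $\text{cmp}\{\mathbf 1,e^{z}\}$), $A=A+\mathbb{R}^{n}_{+}$.

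For the ``if'' direction, let $A$ be symmetric, $S$ a symmetric problem, $x\in F^{A}(S)$, and $\pi\in\Pi$; I argue by contradiction. If $x_{\pi}\notin F^{A}(S)$, there is $y\in S$ with $\log y-(\log x)_{\pi}\in A$. Applying the coordinate permutation associated with $\pi^{-1}$ and using the symmetry of $A$ together with the identities $(v_{\pi})_{\pi^{-1}}=v$ and $(\log x)_{\pi}=\log(x_{\pi})$, this gives $\log(y_{\pi^{-1}})-\log x\in A$; since $S$ is symmetric, $y_{\pi^{-1}}\in S$, contradicting $x\in F^{A}(S)$. Hence $F^{A}$ is anonymous.

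For the ``only if'' direction, which is the crux, suppose $F^{A}$ is anonymous but $A$ is not symmetric. Since $\Pi$ is generated by transpositions, $A$ fails to be invariant under some transposition $\bar\pi=(i\ j)$, so there is $\bar z\in A$ with $\bar z_{\bar\pi}\notin A$ and $\bar z_{i}\ne\bar z_{j}$; using the openness of $A$ and downward-closedness of $\mathbb{R}^{n}\setminus A$, one may replace $\bar z$ by $\bar z-\delta\mathbf 1$ for small $\delta>0$ so as to arrange in addition that $\bar z_{\bar\pi}$ lies in the interior of $\mathbb{R}^{n}\setminus A$. The plan is then to produce a symmetric problem $S$ and a point $x\in S$ that is not fixed by $\bar\pi$ with $x\in F^{A}(S)$ but $x_{\bar\pi}\notin F^{A}(S)$, contradicting \textit{anonymity} (note that, by anonymity, the two members $x,x_{\bar\pi}$ of a $\langle\bar\pi\rangle$-orbit must share their membership status in $F^{A}(S)$). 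Concretely, I would take $S=\text{cmp}\big(\{e^{v}\mid v\in O_{\bar z}\cup O_{\bar w}\}\big)$, where $O_{v}$ is the full $\Pi$-orbit of $v$ and $\bar w$ is an auxiliary vector chosen so that $e^{\bar w}$ dominates $e^{\bar z_{\bar\pi}}$ while no permutation of $e^{\bar w}$ dominates $e^{\bar z}$; upward-closedness of $A$ turns ``$x$ is dominated in $S$'' into finitely many membership tests ``$v-\log x\in A$ for $v$ a vertex of $S$'', which can be checked coordinate by coordinate, the coordinates outside $\{i,j\}$ behaving as spectators. In the auxiliary case where $e^{\bar z}$ happens to be dominated by one of its own orbit points (so all orbit points of $e^{\bar z}$ are excluded from $F^{A}(S)$), I would instead invoke \textit{efficiency} to obtain a Pareto-efficient point of $F^{A}(S)$ that is not an orbit point, and run the same argument on its $\bar\pi$-image.

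The main obstacle is precisely this construction: one must choose $\bar w$ and control the cross-differences $\bar z_{\sigma}-\bar z_{\tau}$ and $\bar w_{\sigma}-\bar z_{\tau}$ generated by the full $\Pi$-orbits so that exactly the intended pattern of dominations holds, and handle the behaviour of $A$ near its boundary---which is what the two preliminary reductions (upward-closedness of $A$ and the slight shift of $\bar z$) are designed for, as they let the whole analysis be carried out in the two coordinates moved by $\bar\pi$. Once $A$ has been shown to be symmetric, the corollary follows, the ``anonymity'' part of the equivalence matching the symmetry of the improving set.
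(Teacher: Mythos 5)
Your reduction to Theorem~\ref{thm:main} and your ``if'' direction (symmetric $A$ implies \textit{anonymity}, via applying $\pi^{-1}$ to a putative dominator) are correct. But the ``only if'' direction --- the substantive half --- is not actually proved: you present a construction you ``would'' carry out, and you yourself flag as ``the main obstacle'' the choice of the auxiliary vector $\bar w$ and the control of the cross-differences $\bar z_\sigma-\bar z_\tau$ generated by the two orbits, without resolving either. Since that is the entire content of the direction, this is a genuine gap, not a detail. (A smaller issue: your justification of upward-closedness of $A$ via uniqueness of the representing improving set is itself unproven and unnecessary; it follows directly from $A+\mathbb{R}^n_{+}\subseteq A+\mathbb{R}^n_{++}\cup A\subseteq A$ using openness of $A$, $\mathbb{R}^n_{++}\subseteq A$, and $A+A\subseteq A$.)

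The missing idea that dissolves your obstacle is to first exploit single-orbit problems. For any $v\in\mathbb{R}^n$, the problem $\text{cmp}\{e^{v_\sigma}\mid\sigma\in\Pi\}$ is symmetric; \textit{efficiency} forces some orbit point into the solution set and \textit{anonymity} then forces all of them in, which yields $v_\sigma-v\notin A$ for every $v\in\mathbb{R}^n$ and $\sigma\in\Pi$. This single observation eliminates all intra-orbit dominations for free, so no delicate choice of $\bar w$ is needed: take $u\in A$ with $u_\rho\notin A$, set $z'=c\mathbf{1}-u_\rho$ for any $c\in\mathbb{R}$, and consider the symmetric problem $S=\text{cmp}\bigl(\{e^{c\mathbf{1}}\}\cup\{e^{z'_\sigma}\mid\sigma\in\Pi\}\bigr)$. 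Then $e^{z'}\in F^A(S)$ because $c\mathbf{1}-z'=u_\rho\notin A$ and $z'_\sigma-z'\notin A$ by the orbit observation (upward-closedness reduces all checks to generators), while $c\mathbf{1}-z'_{\rho^{-1}}=(u_\rho)_{\rho^{-1}}=u\in A$ shows $e^{z'_{\rho^{-1}}}\notin F^A(S)$, contradicting \textit{anonymity}. This is in substance the route the paper takes --- \textit{anonymity} makes the revealed relation $\succ^F$ from the proof of Theorem~\ref{thm:main} permutation-covariant, hence $A$ symmetric --- and your two-orbit case analysis, with its auxiliary ``dominated by its own orbit'' branch, is never needed.
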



To understand what solutions are included in the class of coarse Nash solutions, we provide several examples (and counterexamples) of them.

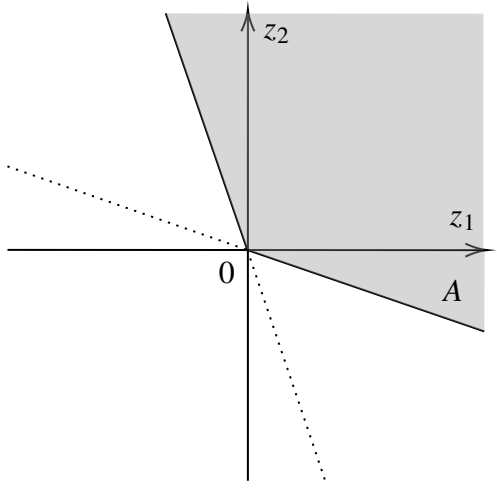
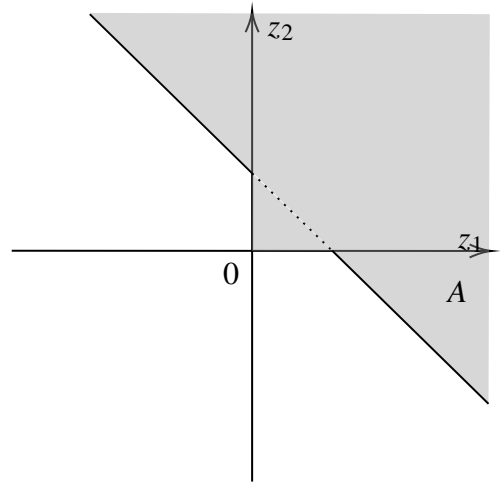
\begin{figure}
  \centering
  \begin{subfigure}[b]{0.48\textwidth}
    \centering

    \tikzset{every picture/.style={line width=0.75pt}} 
    
    \begin{tikzpicture}[x=0.75pt,y=0.75pt,yscale=-1,xscale=1]
    
    \draw    (211,159) -- (448.5,159.12) ;
    \draw [shift={(450.5,159.12)}, rotate = 180.03] [color={rgb, 255:red, 0; green, 0; blue, 0 }  ][line width=0.75]    (10.93,-3.29) .. controls (6.95,-1.4) and (3.31,-0.3) .. (0,0) .. controls (3.31,0.3) and (6.95,1.4) .. (10.93,3.29)   ;
    \draw    (331,275.12) -- (331,40.12) ;
    \draw [shift={(331,38.12)}, rotate = 90] [color={rgb, 255:red, 0; green, 0; blue, 0 }  ][line width=0.75]    (10.93,-3.29) .. controls (6.95,-1.4) and (3.31,-0.3) .. (0,0) .. controls (3.31,0.3) and (6.95,1.4) .. (10.93,3.29)   ;
    \draw    (290,40) -- (330.75,159.06) ;
    \draw  [color={rgb, 255:red, 0; green, 0; blue, 0 }  ,draw opacity=0 ][fill={rgb, 255:red, 155; green, 155; blue, 155 }  ,fill opacity=0.4 ] (330.75,159.06) -- (290,40) -- (331.13,40.13) -- cycle ;
    \draw  [color={rgb, 255:red, 0; green, 0; blue, 0 }  ,draw opacity=0 ][fill={rgb, 255:red, 155; green, 155; blue, 155 }  ,fill opacity=0.4 ] (330.81,40.13) -- (448.57,40.19) -- (448.51,159.12) -- (330.75,159.06) -- cycle ;
    \draw  [dash pattern={on 0.84pt off 2.51pt}]  (330.75,159.06) -- (370.25,277.06) ;
    \draw    (330.75,159.06) -- (449,200) ;
    \draw  [dash pattern={on 0.84pt off 2.51pt}]  (211,117) -- (330.75,159.06) ;
    \draw  [color={rgb, 255:red, 0; green, 0; blue, 0 }  ,draw opacity=0 ][fill={rgb, 255:red, 155; green, 155; blue, 155 }  ,fill opacity=0.4 ] (449,200) -- (330.75,159.06) -- (449.13,159.43) -- cycle ;
    
    \draw (315,163.4) node [anchor=north west][inner sep=0.75pt]    {$0$};
    \draw (337,44) node [anchor=north west][inner sep=0.75pt]  {$z_{2}$};
    \draw (430,138) node [anchor=north west][inner sep=0.75pt] {$z_{1}$};
    \draw (426,172.4) node [anchor=north west][inner sep=0.75pt]    {$A$};

    \end{tikzpicture}
    \caption{The set $A$ in Example 3}
    \label{fig:ex3}
  \end{subfigure}
  \hfill
  \begin{subfigure}[b]{0.48\textwidth}
    \centering

    \tikzset{every picture/.style={line width=0.75pt}} 
    
    \begin{tikzpicture}[x=0.75pt,y=0.75pt,yscale=-1,xscale=1]
    
    \draw    (211,159) -- (448.5,159.12) ;
    \draw [shift={(450.5,159.12)}, rotate = 180.03] [color={rgb, 255:red, 0; green, 0; blue, 0 }  ][line width=0.75]    (10.93,-3.29) .. controls (6.95,-1.4) and (3.31,-0.3) .. (0,0) .. controls (3.31,0.3) and (6.95,1.4) .. (10.93,3.29)   ;
    \draw    (331,275.12) -- (331,40.12) ;
    \draw [shift={(331,38.12)}, rotate = 90] [color={rgb, 255:red, 0; green, 0; blue, 0 }  ][line width=0.75]    (10.93,-3.29) .. controls (6.95,-1.4) and (3.31,-0.3) .. (0,0) .. controls (3.31,0.3) and (6.95,1.4) .. (10.93,3.29)   ;
    \draw  [color={rgb, 255:red, 0; green, 0; blue, 0 }  ,draw opacity=0 ][fill={rgb, 255:red, 155; green, 155; blue, 155 }  ,fill opacity=0.4 ] (331,120) -- (371,159.12) -- (330.95,159.06) -- cycle ;
    \draw    (250,40) -- (331,120) ;
    \draw  [dash pattern={on 0.84pt off 2.51pt}]  (331,120) -- (371,159) ;
    \draw  [color={rgb, 255:red, 0; green, 0; blue, 0 }  ,draw opacity=0 ][fill={rgb, 255:red, 155; green, 155; blue, 155 }  ,fill opacity=0.4 ] (449,236) -- (250,39.37) -- (449.62,39.99) -- cycle ;
    \draw    (371,159) -- (449,236) ;
    
    \draw (315,163.4) node [anchor=north west][inner sep=0.75pt]    {$0$};
    \draw (337,42.4) node [anchor=north west][inner sep=0.75pt]  {$z_{2}$};
    \draw (432,149) node [anchor=north west][inner sep=0.75pt] {$z_{1}$};
    \draw (426,172.4) node [anchor=north west][inner sep=0.75pt]    {$A$};

    \end{tikzpicture}
    \caption{The set $A$ in Example 4}
    \label{fig:ex4}
  \end{subfigure}

  \caption{Examples of improving sets}
  \label{fig:ex}
\end{figure}

\begin{example}
Suppose that $A = \{ z\in \mathbb{R}^n \mid \sum_{i\in N}w_i z_i  > 0 \}$ for some $w\in \Delta_N$.
As discussed earlier, the coarse Nash solution $F^A$ coincides with the weighted Nash solution $F^w$, which is an extreme case.
\end{example}

\begin{example}
For the other extreme case, suppose that $A$ is the smallest set, that is $A=\mathbb{R}_{++}^n$.
Then, for all $x,y\in \mathbb{R}_{++}^n $, 
\begin{align*}
    \log y - \log x \in A 
    &\iff [~  \log y_i  - \log x_i   > 0  ~~ \text{for all } i\in N ~ ]  \iff [~  y_i >  x_i   ~~ \text{for all } i\in N ~ ], 
\end{align*}
which implies that $F^A (S) = \{ x\in S \mid \nexists y \in S~~ \text{s.t. } y \gg x \}$ for all $S\in \mathcal{B}$. 
That is, $F$ is the solution that assigns to each problem the set of all weakly Pareto optimal agreements. 
\end{example}

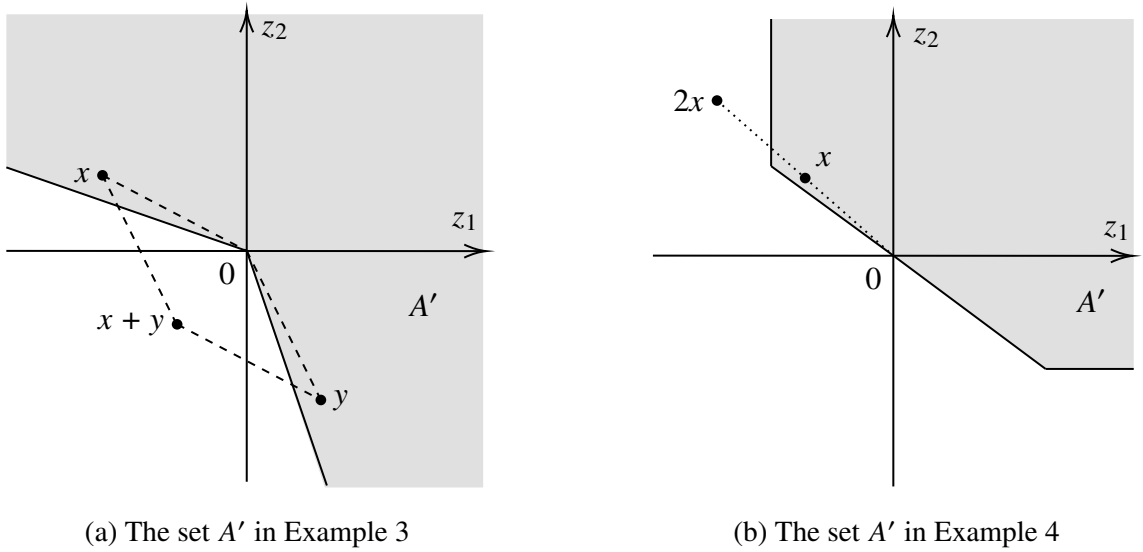
\begin{figure}
  \centering
  \begin{subfigure}[b]{0.48\textwidth}
    \centering
    \tikzset{every picture/.style={line width=0.75pt}}        

    \begin{tikzpicture}[x=0.75pt,y=0.75pt,yscale=-1,xscale=1]

    \fill[gray!60,opacity=0.4]
      (211,40) -- (449,40) -- (449,278) -- (370,278) -- (331,159) -- (211,117) -- cycle;
    
    \draw    (211,159) -- (450,159.12) ;
    \draw [shift={(450.5,159.12)}, rotate = 180.03] [color=black][line width=0.75]    
    (10.93,-3.29) .. controls (6.95,-1.4) and (3.31,-0.3) ..
    (0,0) .. controls (3.31,0.3) and (6.95,1.4) ..
    (10.93,3.29);
    
    \draw    (331,275.12) -- (331,38.12) ;
    \draw [shift={(331,38.12)}, rotate = 90] [color=black][line width=0.75]    
    (10.93,-3.29) .. controls (6.95,-1.4) and (3.31,-0.3) ..
    (0,0) .. controls (3.31,0.3) and (6.95,1.4) ..
    (10.93,3.29);
    
    \draw (211,117) -- (331,159);
    \draw (331,159) -- (371,277);
    
    \draw[dashed] (259,121) -- (331,159);
    \fill (259,121) circle (2pt) node[left] {$x$};
    
    \draw[dashed] (368,234) -- (331,159);
    \fill (368,234) circle (2pt) node[right] {$y$};
    
    \draw[dashed] (296,195.94) -- (259,121);
    \fill (296.25,195.94) circle (2pt) node[left] {$x+y$};
    
    \draw[dashed] (296.25,195.94) -- (368,234);
    
    \draw (331,159) node[below left] {$0$};
    
    \draw (345,48) node {$z_{2}$};
    \draw (440,144) node {$z_{1}$};
    \draw (419,186) node {$A'$};
    
    \end{tikzpicture}

    \caption{The set $A'$ in Example 3}
    \label{fig:cex3}
  \end{subfigure}
  \hfill
  \begin{subfigure}[b]{0.48\textwidth}
    \centering

    \tikzset{every picture/.style={line width=0.75pt}} 
    
    \begin{tikzpicture}[x=0.75pt,y=0.75pt,yscale=-1,xscale=1]
    \fill[gray!60,opacity=0.4]
      (270,40) -- (270,114) -- (331,159) -- (407,216) -- (451,216) -- (451,40) -- cycle;
    
    \draw (211,159) -- (448.5,159.12);
    \draw [shift={(450.5,159.12)}, rotate=180.03][color=black][line width=0.75]
      (10.93,-3.29) .. controls (6.95,-1.4) and (3.31,-0.3) .. (0,0)
      .. controls (3.31,0.3) and (6.95,1.4) .. (10.93,3.29);
    \draw (331,275.12) -- (331,40.12);
    \draw [shift={(331,38.12)}, rotate=90][color=black][line width=0.75]
      (10.93,-3.29) .. controls (6.95,-1.4) and (3.31,-0.3) .. (0,0)
      .. controls (3.31,0.3) and (6.95,1.4) .. (10.93,3.29);
    
    \draw (270,40) -- (270,114);        
    \draw (270,114) -- (331,159);       
    \draw (331,159) -- (407,216);       
    \draw (407,216) -- (451,216);       
    
    \draw[dotted] (243,81) -- (287,120) -- (331,159);
    \fill (287,120) circle (2pt) node[above right] {$x$};
    \fill (243,81)  circle (2pt) node[left]       {$2x$};
    
    \draw (331,159) node[below left] {$0$};
    \draw (335,39) node[anchor=north west] {$z_{2}$};
    \draw (430,136) node[anchor=north west] {$z_{1}$};
    \draw (416,170.4) node[anchor=north west] {$A'$};
    
    \end{tikzpicture}

    \caption{The set $A'$ in Example 4}
    \label{fig:cex4}
  \end{subfigure}

  \caption{Counterexamples of improving sets}
  \label{fig:cex}
\end{figure}

\begin{example}
\label{exm:2_multiweight}
Suppose that an improving set $A$ is the convex cone $\{ x\in \mathbb{R}^n_{++} \mid \sum_{i\in N} w_i x_i > 0 \quad (\forall w\in W ) \}$ for some nonempty closed subset $W \subseteq \Delta_N$ (Figure \hyperref[fig:ex3]{2(a)}). 
In this case, for all $x,y\in \mathbb{R}_{++}^n $, 
\begin{equation}
     \log y - \log x  \in A  \iff  \qty[ ~ \prod_{i\in N} y^{w_i}_i > \prod_{i\in N} x_i^{w_i} \quad  \forall w\in W ~], 
\end{equation}
which implies that $F(S) = \{ x\in S \mid \nexists y \in S ~~ \text{s.t. }  \prod_{i\in N} y^{w_i} > \prod_{i\in N} x^{w_i}_i ~~  (\forall w\in W) \}$ for all $S\in \mathcal{B}$. 
That is, the solution becomes a multi-weighted Nash solution, which chooses outcomes that can be justified by some set $W$ of weights.

By contrast, the set $ A' = \{ x\in \mathbb{R}^n_{++} \mid \exists w\in W ~~~\text{s.t.}~~~ \sum_{i\in N} w_i x_i > 0\}$ for some nonempty closed subset $W \subseteq \Delta_N$ with $|W|\geq 2$ is not an improving set (Figure \hyperref[fig:cex3]{3(a)}). 
For this set, we can take $x,y\in A'$ such that $x + y\in \mathbb{R}^n_-$, which violates  the definition of improving sets. 
Thus, the solution $F$ defined as $F(S) = \{ x\in S \mid \nexists y \in S ~~ \text{s.t. }  \prod_{i\in N} y^{w_i} > \prod_{i\in N} x^{w_i}_i ~~  (\exists w\in W) \}$ for all $S\in \mathcal{B}$ is not a coarse Nash solution.  
\end{example}

The final example shows that there is no logical relationship between improving sets and the convexity. 

\begin{example}
\label{exm:3_union}
Suppose $A =  \mathbb{R}^n_{++} \cup \{ x\in \mathbb{R}^n \mid \sum_{i\in N} x_i > \varepsilon \}$ for some $\varepsilon\in \mathbb{R}_{++}$ (Figure \hyperref[fig:ex4]{2(b)}). 
Note that this nonconvex set satisfies the definition of improving sets. 
Then, for all $x,y\in \mathbb{R}_{++}^n $, 
\begin{align}
    \label{eq:dom_ex}
     \log y - \log x \in A    
     \iff  [~  y_i >  x_i   ~~ \text{for all } i\in N ~ ] ~ \lor ~   \qty[ ~ \prod_{i\in N} y_i - \prod_{i\in N} x_i > \varepsilon ~].
\end{align}
Thus, $F^A$ can be represented as the solution that assigns to each problem $S\in \mathcal{B}$ the set of $x$ such that there is no $y \in S$ that satisfies \eqref{eq:dom_ex}. Note that the latter condition in \eqref{eq:dom_ex} can be interpreted as the dominance relation associated with the Nash solution with a threshold $\varepsilon$: 
According to this, $y$ is deemed to be better than $x$ if the difference in their Nash products is greater than $\varepsilon$. 

Furthermore, consider the convex set $A' \subseteq \mathbb{R}^n$  drawn in Figure \hyperref[fig:cex4]{3(b)}. 
Then, for some $x\in A'$, $x+x\notin A'$ holds, which violates the condition (ii) in the definition of improving sets. 
\end{example}

From these examples, one might think that every coarse Nash solution yields an improving set that is sharper than that of some weighted Nash solution. 
The following result shows this conjecture is true and justifies that we call them coarse Nash solutions: 
For any coarse Nash solution, there exists a weighted Nash solution whose chosen outcomes are always included within those of the coarse solution.

\begin{proposition}\label{prop_improving set}
For any improving set $A$, there exists $w \in \Delta_N$ such that $A \subseteq \{x \in \mathbb{R}^m \mid \sum_{i \in N}w_ix_i>0  \}$. 
Furthermore, for all $S\in \mathcal{B}$,  $ F^w (S) \subseteq F^A(S)$ holds. 
\end{proposition}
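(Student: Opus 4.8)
The plan is to first establish the geometric inclusion $A \subseteq \{x \in \mathbb{R}^n \mid \sum_{i \in N} w_i x_i > 0\}$ for a suitable $w \in \Delta_N$, and then to deduce $F^w(S) \subseteq F^A(S)$ almost immediately from it.

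For the inclusion, the key point is that the two-stage improvement property lets us ``clear denominators.'' I would work with the convex hull $\mathrm{co}(A)$ and first claim that $\mathrm{co}(A) \cap (-\mathbb{R}^n_{++}) = \emptyset$. Suppose not: some $z = \sum_{j=1}^k \lambda_j a_j$ with $a_j \in A$, $\lambda_j \ge 0$, $\sum_j \lambda_j = 1$ satisfies $z \ll \mathbf{0}$. Since $\mu \mapsto \sum_j \mu_j a_j$ is continuous and $\{z \in \mathbb{R}^n \mid z \ll \mathbf{0}\}$ is open, I can replace $(\lambda_j)$ by rational weights $(\mu_j)$ in the simplex, still with $\sum_j \mu_j a_j \ll \mathbf{0}$; writing $\mu_j = p_j / q$ with $p_j \in \mathbb{Z}_{\ge 0}$, $q \in \mathbb{Z}_{> 0}$, $\sum_j p_j = q$, repeated use of the two-stage improvement property gives $\sum_j p_j a_j \in A$, i.e.\ $q \big( \sum_j \mu_j a_j \big) \in A$. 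But $q \big( \sum_j \mu_j a_j \big) \ll \mathbf{0}$, contradicting $A \subseteq \mathbb{R}^n \setminus \mathbb{R}^n_-$. Thus $\mathrm{co}(A)$ and the convex set $-\mathbb{R}^n_{++}$ are disjoint, so a separating hyperplane theorem yields $w \ne \mathbf{0}$ and $c \in \mathbb{R}$ with $w \cdot x \ge c$ for all $x \in \mathrm{co}(A) \supseteq A$ and $w \cdot y \le c$ for all $y \ll \mathbf{0}$. Letting $y \to \mathbf{0}$ gives $c \ge 0$; evaluating on $x = (1,\dots,1) + t e_i \in \mathbb{R}^n_{++} \subseteq A$ (with $e_i$ the $i$-th unit vector) and letting $t \to \infty$ forces $w_i \ge 0$ for each $i$, so after normalization $w \in \Delta_N$.

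It remains to upgrade $w \cdot x \ge c \ge 0$ on $A$ to a strict inequality, and here openness of $A$ is essential: if some $x \in A$ had $w \cdot x = 0$, then for small $\delta > 0$ the point $x - \delta w$ would lie in $A$ yet satisfy $w \cdot (x - \delta w) = -\delta\, w \cdot w < 0 \le c$, a contradiction. Hence $w \cdot x > 0$ for every $x \in A$, i.e.\ $A \subseteq \{x \in \mathbb{R}^n \mid \sum_{i \in N} w_i x_i > 0\}$. For the last claim, recall (as noted after \eqref{eq:nash_alt} and in Example 1) that, for every $S \in \mathcal{B}$, $x \in F^w(S)$ iff $x \in S$ and no $y \in S$ satisfies $\sum_{i \in N} w_i(\log y_i - \log x_i) > 0$, i.e.\ $\log y - \log x \in \{z \in \mathbb{R}^n \mid \sum_{i \in N} w_i z_i > 0\}$ --- this is just the monotone transformation $\log$ applied to the defining $\argmax$. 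Now fix $S$ and $x \in F^w(S)$; if $x \notin F^A(S)$ there is $y \in S$ with $\log y - \log x \in A \subseteq \{z \mid \sum_{i\in N} w_i z_i > 0\}$, contradicting $x \in F^w(S)$. Therefore $x \in F^A(S)$, which gives $F^w(S) \subseteq F^A(S)$.

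The main obstacle is the disjointness claim $\mathrm{co}(A) \cap (-\mathbb{R}^n_{++}) = \emptyset$: the ``clear denominators'' argument works verbatim only for convex combinations with rational coefficients, so one must first pass from a general combination to a rational one by the continuity/density perturbation above, and this is where it matters that $A$ avoids the whole nonpositive orthant rather than merely the origin. Everything else --- the separation, the sign of $w$, the strictness from openness of $A$, and the final set comparison --- is routine.
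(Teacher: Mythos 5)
Your proof is correct and follows essentially the same route as the paper's: pass to the convex hull, rule out nonpositive points of $\mathrm{co}(A)$ by perturbing to rational convex weights and clearing denominators via the two-stage improvement property, apply a separating hyperplane theorem, and use openness of $A$ to turn the weak inequality into a strict one. You are in fact a bit more explicit than the paper at two points---targeting the open negative orthant so that the rational perturbation is legitimate, and writing out the ``Furthermore'' claim $F^w(S)\subseteq F^A(S)$, which the paper's proof leaves implicit---but these are refinements of the same argument, not a different approach.
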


\begin{proof}
Let $A$ be an arbitrary improving set. 
Since $A$ is an open set, it is sufficient to prove that for some $w^*\in \Delta_N$, $\textbf{co}(A) \subseteq \{x \in \mathbb{R}^m \mid \sum_{i \in N}w^*_ix_i \geq 0  \}$ holds, where $\textbf{co}(A)$ is the convex hull of $A$. 
If $\mathbf{0}$ is not an interior point of $\textbf{co}(A)$, then by the separating hyperplane theorem, there exists $w^*\in \mathbb{R}^n$ such that $\sum_{i \in N}w^*_ix_i\geq 0 $ for all $x\in \textbf{co}(A)$. Since $\mathbb{R}^n_{+} \subseteq A$, we can set $w^*\in  \Delta_N$. 

To complete the proof, we show that $\mathbf{0}$ is not an interior point of $\textbf{co}(A)$. 
Suppose to the contrary that $\mathbf{0}$ is an interior point of $\textbf{co}(A)$.
Then, there exist $x^1, x^2, \ldots, x^K \in A$ and $\alpha^1, \alpha^2, \ldots, \alpha^K \in [0,1]$ with $\sum_{k = 1}^K \alpha^k = 1$ such that 
\begin{equation*}
    \sum_{k= 1}^K \alpha^k x^k \in \mathbb{R}^n_- .
\end{equation*} 
Note that we can take $\alpha^1, \alpha^2, \ldots, \alpha^K \in \mathbb{Q}\cap[0,1]$, that is, there exist $p^1, p^2, \ldots, p^K, q^1, q^2, \ldots, q^K \in \mathbb{N}$ such that $\alpha^k = p^k/q^k$ for each $k = 1,2,\ldots, K$. Let $x^* = (\prod_{k= 1}^K q_k) \sum_{k= 1}^K  p^k  x^k/q^k \in \mathbb{R}^n_-$. By applying the property (ii) in the definition of improving sets repeatedly,  we have $ x^* \in A$, which is a contradiction to the property (i) in the definition of improving sets. 
\end{proof}

To put it differently, this proposition states that 
for each coarse Nash solution $F^A$, there exists $w\in \Delta_N$ such that for all $x,y\in \mathbb{R}^n_{++}$, 
\begin{equation}
    x\succ y 
    \implies \prod_{i\in N} x_i^{w_i} \geq  \prod_{i\in N} y_i^{w_i}, 
\end{equation}
where $\succ$ denotes the asymmetric part of $\succsim$ in Proposition \ref{prop_rationalization}. 
The existence of such a weight is closely related to \citeauthor{aumann1962utility}'s (\citeyear{aumann1962utility}) result. 
Although we cannot directly compare these results because of the differences in frameworks, our proposition can be regarded as an extension of \citeauthor{aumann1962utility}'s (\citeyear{aumann1962utility}) result.  For a detailed discussion, see Section \ref{sec_literature}. 

Furthermore, from this result, we can obtain the following important properties of the collection of improving sets. 

\begin{corollary}\label{prop_improving set symmetric collection}
    For the collection of improving sets, $\mathbb{R}^n_{++}$ is the minimum set with respect to the set inclusion, and a set $A^*$ is a maximal set with respect to the set inclusion if and only if $A^* = \{x \in \mathbb{R}^m \mid \sum_{i \in N}w_ix_i>0  \}$ for some $w\in \Delta_N$. 
\end{corollary}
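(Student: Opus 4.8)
The plan is to handle the two assertions in turn, with Proposition~\ref{prop_improving set} doing the real work for the second one. The minimum claim is immediate: $\mathbb{R}^n_{++}$ is itself an improving set (it is open, it is disjoint from $\mathbb{R}^n_-$ and hence lies in $\mathbb{R}^n \backslash \mathbb{R}^n_-$, and it is closed under addition), and condition~(i) in the definition of an improving set says precisely that $\mathbb{R}^n_{++}$ is contained in \emph{every} improving set; hence it is the minimum for set inclusion. For the maximality claim, write $H_w := \{x \in \mathbb{R}^n \mid \sum_{i \in N} w_i x_i > 0\}$ for $w \in \Delta_N$. The first thing I would record is that each such $H_w$ is an improving set: it is open; since $w$ is a nonzero vector of nonnegative weights we get $\mathbb{R}^n_{++} \subseteq H_w$ and $H_w \cap \mathbb{R}^n_- = \emptyset$, so $\mathbb{R}^n_{++} \subseteq H_w \subseteq \mathbb{R}^n \backslash \mathbb{R}^n_-$; and it is closed under addition because $\sum_i w_i x_i > 0$ and $\sum_i w_i y_i > 0$ force $\sum_i w_i (x_i + y_i) > 0$.

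Granting this, both directions are short. For ``only if'', let $A^*$ be a maximal improving set; Proposition~\ref{prop_improving set} gives $w \in \Delta_N$ with $A^* \subseteq H_w$, and since $H_w$ is an improving set, maximality of $A^*$ forces $A^* = H_w$. For ``if'', fix $w \in \Delta_N$ and suppose $H_w \subseteq A$ for some improving set $A$; applying Proposition~\ref{prop_improving set} to $A$ yields $w' \in \Delta_N$ with $A \subseteq H_{w'}$, hence $H_w \subseteq H_{w'}$; if we can conclude $w = w'$ then $A = H_w$, so $H_w$ admits no strictly larger improving set and is maximal.

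The one genuinely non-routine step is thus the implication $H_w \subseteq H_{w'} \Rightarrow w = w'$, that is, the fact that an inclusion of open half-spaces through the origin pins down the normal direction. I would argue as follows: passing to closures, $H_w \subseteq H_{w'}$ gives $\{x \mid \sum_i w_i x_i \geq 0\} \subseteq \{x \mid \sum_i w'_i x_i \geq 0\}$ (both $w$ and $w'$ are nonzero); if $\sum_i w_i x_i = 0$ then $x$ and $-x$ both lie in the left-hand set, hence in the right-hand one, so $\sum_i w'_i x_i = 0$. Thus the two linear functionals have the same kernel (each of corank one), so $w' = \lambda w$ for a scalar $\lambda$; testing on any $x$ with $\sum_i w_i x_i > 0$ gives $\lambda > 0$, and the normalization $\sum_i w_i = \sum_i w'_i = 1$ (both lie in $\Delta_N$) forces $\lambda = 1$. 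The only case needing a second glance is when some coordinates of $w$ vanish, but then $H_w$ is still a bona fide open half-space through the origin, so the argument is unchanged; everything else follows directly from Proposition~\ref{prop_improving set} and the definition of an improving set.
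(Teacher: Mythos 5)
Your proposal is correct and follows exactly the route the paper intends: the corollary is stated without proof as a consequence of Proposition \ref{prop_improving set}, and you supply precisely the missing routine steps (that $\mathbb{R}^n_{++}$ and each half-space $\{x \in \mathbb{R}^n \mid \sum_{i\in N} w_i x_i > 0\}$ are themselves improving sets, plus the half-space inclusion rigidity $H_w \subseteq H_{w'} \Rightarrow H_w = H_{w'}$). No gaps; the kernel argument for the rigidity step is sound since $w, w' \neq \mathbf{0}$.
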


If we additionally require that an improving set $A$ be symmetric, a weight vector specified in Proposition \ref{prop_improving set} can be taken to be symmetric.
That is, under the symmetry of $A$, the solution $F^A$ assigns a coarser set of outcomes than that chosen by the Nash solution. 

\begin{proposition}\label{prop_improving set symmetric}
    If $A$ is a symmetric improving set, then $A\subseteq \{ x\in \mathbb{R}^n \mid \sum_{i\in N} x_i > 0 \}$. 
    Furthermore, for all $S\in \mathcal{B}$,  $ F^\text{N} (S) \subseteq F^A(S)$ holds. 
\end{proposition}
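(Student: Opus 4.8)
The plan is to build on Proposition~\ref{prop_improving set}, which already supplies a weight $w\in\Delta_N$ with $A\subseteq\{x\in\mathbb{R}^n\mid\sum_{i\in N}w_ix_i>0\}$, and then to \emph{symmetrize} this weight using the symmetry of $A$. First I would observe that for every permutation $\pi\in\Pi$, symmetry of $A$ gives $x_\pi\in A$ whenever $x\in A$, hence $\sum_{i\in N}w_i x_{\pi(i)}>0$ for all $x\in A$; re-indexing $j=\pi(i)$ this says $A\subseteq\{x\in\mathbb{R}^n\mid\sum_{j\in N}w_{\pi^{-1}(j)}x_j>0\}$. Since $\pi\mapsto\pi^{-1}$ is a bijection of $\Pi$, we conclude that $A$ is contained in the intersection, over all $\sigma\in\Pi$, of the open half-spaces $H_\sigma:=\{x\in\mathbb{R}^n\mid\sum_{j\in N}w_{\sigma(j)}x_j>0\}$.

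Next I would average. For any $x\in A$ we have $\sum_{j\in N}w_{\sigma(j)}x_j>0$ for every $\sigma\in\Pi$, so averaging over all $n!$ permutations gives
\[
0<\frac{1}{n!}\sum_{\sigma\in\Pi}\sum_{j\in N}w_{\sigma(j)}x_j=\sum_{j\in N}\left(\frac{1}{n!}\sum_{\sigma\in\Pi}w_{\sigma(j)}\right)x_j.
\]
Because, for fixed $j$, each value $w_k$ arises as $w_{\sigma(j)}$ for exactly $(n-1)!$ permutations $\sigma$, the bracketed coefficient equals $\frac{(n-1)!}{n!}\sum_{k\in N}w_k=\frac1n$ for every $j$ (using $w\in\Delta_N$). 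Hence $\frac1n\sum_{j\in N}x_j>0$, i.e. $\sum_{j\in N}x_j>0$, for all $x\in A$; this is exactly $A\subseteq\{x\in\mathbb{R}^n\mid\sum_{i\in N}x_i>0\}$.

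For the second claim I would argue by contradiction, exactly as in the remark following Proposition~\ref{prop_improving set}. Fix $S\in\mathcal{B}$ and $x\in F^{\text N}(S)$, and suppose $x\notin F^A(S)$: then there is $y\in S$ with $\log y-\log x\in A$, so by the inclusion just proved $\sum_{i\in N}(\log y_i-\log x_i)>0$, i.e. $\prod_{i\in N}y_i>\prod_{i\in N}x_i$, contradicting $x\in\argmax_{x'\in S}\prod_{i\in N}x'_i$. Hence $F^{\text N}(S)\subseteq F^A(S)$.

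The argument is short because Proposition~\ref{prop_improving set} does the heavy lifting; the only point requiring care is that symmetrization must preserve \emph{strict} positivity, which is immediate here since we are averaging finitely many strict inequalities. Equivalently, one could note that $\bigcap_{\sigma\in\Pi}H_\sigma$ is itself a symmetric improving set containing $A$ and invoke Corollary~\ref{prop_improving set symmetric collection}. If anything is an obstacle, it is purely the bookkeeping of composing permutations and re-indexing the sums correctly.
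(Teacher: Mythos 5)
Your proof is correct, but it takes a different route from the paper. The paper's own argument is direct and self-contained: supposing some $x\in A$ with $\sum_{i\in N}x_i\le 0$, symmetry puts every permuted copy $x_\pi$ in $A$, and repeated use of condition (ii) puts $x^\ast=\sum_{\pi\in\Pi}x_\pi=\bigl((n-1)!\sum_i x_i,\ldots,(n-1)!\sum_i x_i\bigr)\le\mathbf{0}$ in $A$, contradicting condition (i); no appeal to Proposition~\ref{prop_improving set} is needed. You instead import the weight $w$ from Proposition~\ref{prop_improving set} and symmetrize it by averaging the permuted half-space constraints, which is valid (the re-indexing and the count of $(n-1)!$ permutations per value are right, and averaging finitely many strict inequalities preserves strictness). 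There is a pleasant duality here: the paper averages the \emph{elements} $x_\pi$ inside $A$ using closure under addition, while you average the \emph{separating hyperplanes}; the paper's version is shorter and uses only the two defining properties of improving sets, whereas yours makes explicit that the symmetric case is literally the symmetrization of the weight from Proposition~\ref{prop_improving set}, at the cost of invoking that heavier result. Your explicit two-line derivation of $F^{\mathrm N}(S)\subseteq F^A(S)$ is fine (the paper leaves it implicit, as it does for Proposition~\ref{prop_improving set}). One caveat: your closing suggestion that one could instead note $\bigcap_{\sigma\in\Pi}H_\sigma$ is a symmetric improving set and ``invoke Corollary~\ref{prop_improving set symmetric collection}'' is not by itself a complete argument, since that corollary does not say a symmetric improving set sits inside a \emph{symmetric} maximal half-space; but this side remark does not affect your main proof.
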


\begin{proof}
    Let $A$ be a symmetric improving set. Suppose to the contrary that there exists $x\in A$ with $\sum_{i\in N} x_i \leq 0$. Then, by the symmetry of $A$, for each $\pi\in \Pi$, we have $x^\pi \in A$. Let $x^\ast = \sum_{\pi \in \Pi} x^\pi$, that is, $x^\ast = \bigl( (n-1)!(\sum_{i\in N} x_i) , \ldots, (n-1)!(\sum_{i\in N} x_i) \bigr)$. By $\sum_{i\in N} x_i \leq 0$, we have $x^\ast\leq \mathbf{0}$. 
    By applying the property (ii) in the definition of improving sets repeatedly, we have $x^\ast \in A$, which is a contradiction.  
\end{proof}

\section{Related literature}\label{sec_literature}

Motivated by the problems of \textit{Arrow axiom} and Nash's IIA, many alternative axioms and solution concepts, such as \citeauthor{kalai1975other}'s  (\citeyear{kalai1975other}) solution, have been proposed (for comprehensive surveys, see \citet{thomson1981nash,Thomson1994CHAP,Thomson2022RED}).
Nash's IIA has been criticized for the property that the solution outcomes in the original problem must also be chosen in a smaller problem, even if the contraction dramatically changes the bargaining situations (e.g., \citet{luce2012games} and \citet{roth1979axiomatic}). 
Under the decomposition of \textit{Arrow axiom}, \textit{Chernoff axiom} has therefore attracted considerable attention. 
In contrast, this paper focuses on a problem that arises from \textit{dual Chernoff axiom}. 
By weakening this part, we have obtained a class of generalized Nash solutions called coarse Nash solutions. 

Relationships between the rationalizability of choice correspondences and choice axioms have been examined extensively in social choice theory. 
In a series of works, \citet{sen1970collective,sen1971choice,sen1977social} examined the rationalizability of choice correspondences in an abstract setting.\footnote{See \citet{cato2018choice} for a detailed analysis of the relationships among several choice axioms.} 
In the context of the bargaining problem, \citet{peters1991independence} directly investigated the implications of Nash's IIA in single-valued bargaining solutions. 
\citet{ok1999revealed} showed that \textit{Arrow axiom} plays a crucial role in the rationalizability of multi-valued bargaining solutions. 
\citet{nakamura2025wp,nakamura2025wp2} also considered a weakening of \textit{Arrow axiom}, but adopted a different approach. 
Nakamura’s axiom postulates the requirement of \textit{Arrow axiom} only when 
each player’s maximum utility level coincides in both the original and new problems
(for similar axioms, see also \citet{DUBRA2001131} and \citet{xu2006alternative}). 
Combined with standard axioms, especially \textit{scale invariance}, this weakened choice axiom implies rationalizability under the 0--1 normalization of each player’s utility levels.

It should be noted that we employ a domain different from \citeauthor{nash1950bargaining}'s (\citeyear{nash1950bargaining}) original formalization. 
In the most standard framework of the bargaining problem, feasible sets are assumed to be convex, and bargaining solutions are formalized as choice functions. 
By contrast, we do not assume the convexity of feasible sets and allow solutions to select multiple outcomes. 
The non-convex bargaining problem has been considered in several studies, such as \citet{kaneko1980extension}, \citet{herrero1989nash}, \citet{conley1991bargaining},
\citet{zhou1997nash},
\citet{ok1999revealed}, \citet{mariotti2000maximal}, \citet{xu2006alternative,xu2013rationality,xu2019equitable}, and \citet{peters2012wpo}.
For surveys, see also \citet{xu2020nonconvex} and \citet{Thomson2022RED}. 
The convexity has been justified through the assumptions that randomization among agreements is allowed and that each player follows the expected utility theory. 
In the non-convex problem, we can analyze the situations in which these assumptions do not hold. 
Note that in such cases, \textit{scale invariance} still has compelling interpretations, even though this axiom has traditionally been justified within the expected utility framework.
For a detailed discussion, see \citet{xu2006alternative}.

A coarse Nash solution can be regarded as the combination of maximizers of multiple choice criteria, as illustrated in the examples we discussed earlier. The idea of conjoining multiple choice criteria is closely related to \citet{tadenuma2002efficiency,tadenuma2005egalitarian} and \citet{houy2009lexicographic}, where the choice correspondence is defined through the sequential application of efficiency and fairness criteria in the context of general resource allocation problems.
See also  \citet{manzini2012choice} for a general model of the sequential composition of different choice criteria.
In contrast to this sequential approach, our choice correspondence takes into account the intersection of multiple criteria. Such intersections typically involve trade-offs, which often result in incompleteness and render the resulting predictions coarse (e.g., \citealp{sen1973economic,kamaga2018utilitarianism}).

Coarse Nash solutions are also related to incomplete preferences studied in decision theory, such as the expected multi-utility model \citep{dubra2004expected} and the multi-prior expected utility model \citep{bewley2002knightian}. 
These preferences can be represented by the unanimity rule among multiple parameters in the decision maker's mind, and are characterized by a violation of completeness. 
The special cases considered in Example \ref{exm:2_multiweight} correspond to these models. 
The class of coarse Nash solutions encompasses a wider range of rules, such as the solution in Example \ref{exm:3_union}. 
This difference arises from the properties of improving sets, which are not necessarily convex, whereas the upper contour sets of the above decision-making models are convex. 

Furthermore, Proposition \ref{prop_improving set} is related to \citeauthor{aumann1962utility}'s (\citeyear{aumann1962utility}) result. 
Aumann showed that for any risk preference that satisfies the axioms of the expected utility theory except for completeness, there exists a von Neumann-Morgenstern utility function compatible with the preference in a weak sense. 
Proposition \ref{prop_improving set} shows that a similar result holds for a binary relation induced by any improving set: 
There exists a weight over players that is compatible with the binary relation. 
Despite their similarities, our proposition is not a direct application of Aumann’s result or argument because of the differences in the geometric properties of the upper contour sets:  
Unlike Aumann’s argument, improving sets are not necessarily convex cones, and hence the separating hyperplane theorem cannot cover all the cases in our analysis.

\section{Concluding comments}\label{sec_con}

This paper has focused on the limitations of \textit{Arrow axiom} in the classical bargaining problem, particularly targeting its component, \textit{dual Chernoff axiom}. 
We have introduced two new rationality axioms that weaken \textit{Arrow axiom}: \textit{weak Arrow axiom} and \textit{IIE}. 
The main contribution of this paper is to show that these new axioms, when combined with standard axioms, characterize a new class of solutions called coarse Nash solutions.

Coarse Nash solutions generalize the dominance relation of the Nash solution by employing more general open sets, which we call  
improving sets. 
This class of solutions encompasses a wide range of outcomes, including weighted Nash solutions and solutions that assign all weakly Pareto optimal outcomes as extreme cases. 
As its name suggests, this solution always selects a larger (coarser) set of outcomes than some weighted Nash solution.
Consequently, it provides a more flexible guideline for collective decision-making in real-world bargaining situations where multiple criteria, such as efficiency and equity, may conflict, making it difficult to construct a single objective function.


\begin{center}
\Large{{\bf Appendix}}
\end{center}

\appendix

\section{Proof of Proposition \ref{prop_weakArrow}}

Suppose that $F$ satisfies \textit{Chernoff axiom} and \textit{weak dual Chernoff axiom}.
Take any $S, T\in\mathcal{B}$.
By \textit{weak dual Chernoff axiom}, $F(S) \cap F(T) \subseteq F(S \cup T)$.
Clearly, $F(S) \cap F(T) \subseteq S \cap T$.
Therefore, $F(S)\cap F(T) \subseteq  F(S \cup T) \cap S\cap T$ holds.
Then, suppose that $F(S \cup T) \cap S\cap T \neq \emptyset$.
Since $F(S\cup T)\cap S\neq \emptyset$ and $F(S\cup T)\cap T\neq \emptyset$, by \textit{Chernoff axiom}, we have $F(S\cup T)\cap S \subseteq F(S)$ and $F(S\cup T)\cap T \subseteq F(T)$.
Therefore, $F(S)\cap F(T) \supseteq F(S \cup T) \cap S\cap T$.

Conversely, suppose that $F$ satisfies the property of the proposition, \textit{weak Arrow axiom}.
Then, we show that $F$ also satisfies \textit{Chernoff axiom} and \textit{weak dual Chernoff axiom}.
For \textit{Chernoff axiom}, take $S, S'\in \mathcal{B}$ with $S\subseteq S'$.
By \textit{weak Arrow axiom}, we have $F(S)\cap F(S') = F(S') \cap S$. 
By $F(S)\cap F(S') \subseteq F(S)$, we have $F(S) \supseteq  F(S')\cap S$. 
For \textit{weak dual Chernoff axiom},  take $S, T \in \mathcal{B}$. 
Then, by \textit{weak Arrow axiom}, we have $F(S)\cap F(T) = F(S \cup T) \cap S\cap T \subseteq F(S \cup T) $. 
\qed

\section{Proof of Lemma \ref{prop_rationalization}}\label{appendix_proof}

Since the if part is clear in each point, we show the only-if part.
Given a bargaining solution $F$, define a binary relation $\succsim^F$ on $\mathbb{R}^n_{++}$ such that
\[
x \succsim^F y\iff x \in F(\text{cmp}\{x,y\})
\]
for any $x,y \in \mathbb{R}^n_{++}$.
\vspace{3mm}

\textbf{Step 1}: If $F$ satisfies \textit{efficiency}, then $\succsim^F$ is complete.

Take any $x,y \in \mathbb{R}^n_{++}$.
Since $F$ is nonempty for any $S \in \mathcal{B}$, there exists $z \in \text{cmp}\{x,y\}$ such that $z \in F(\text{cmp}\{x,y\})$.
By definition, $x \ge z$ or $y \ge z$.
Therefore, by \textit{efficiency}, we have $x \in F(\text{cmp}\{x,y\})$ or $y \in F(\text{cmp}\{x,y\})$, that is,  $x\succsim^F y$ or $y \succsim^F x$ holds.
\vspace{3mm}

\textbf{Step 2}: If $F$ satisfies \textit{efficiency} and \textit{continuity}, then $\succsim^F$ is monotone and continuous.

The monotonicity follows from \textit{efficiency}. 
To show the continuity, by Step 1, it suffices to prove that for all $x\in \mathbb{R}^n_{++}$, the set $U_x = \{ y\in \mathbb{R}^n_{++} \mid y \succ^F x \}$ is open, or equivalently, $U_x^c = \{ y\in \mathbb{R}^n_{++} \mid y \not\succ^F x \}=\{y \in \mathbb{R}^n_{++} \mid x \succsim^F y\}$ is closed.
Let $x, z\in  \mathbb{R}^n_{++}$ and $\{ z^k \}_{k\in \mathbb{N}} \subseteq U_x^c$ be a sequence that converges to $z$. 
By the definition of $\succ$, $x\in F(\text{cmp} \{ x, z^k\})$. 
By \textit{continuity}, $x \in F(\text{cmp} \{ x, z\})$, which means that $z \not\succ x$. 
Therefore, $U_x^c$ is a closed set.

\vspace{3mm}

\textbf{Step 3}: If $F$ satisfies \textit{weak Arrow axiom} and \textit{IIE}, then $\succsim^F$ is quasi-transitive.

Let $x,y,z\in \mathbb{R}^n_{++}$ be such that $x\succ^F y$ and $y\succ^F z$. 
By the definition of $\succ^F$, we have $x\in F(\text{cmp} \{ x,y \})$, $y\notin F(\text{cmp} \{ x,y \})$, $y\in F(\text{cmp} \{ y,z \})$, and $z\notin F(\text{cmp} \{ y,z \})$. 
We show that $x\in F(\text{cmp} \{ x,y, z \})$,  $y\notin F(\text{cmp} \{ x,y, z \})$, and $z\notin F(\text{cmp} \{ x,y, z \})$.
Note that since $x\in F(\text{cmp} \{ x,y, z \})$ follows from $y\notin F(\text{cmp} \{ x,y, z \})$, $z\notin F(\text{cmp} \{ x,y, z \})$, and \textit{efficiency}, it is sufficient to prove the latter two ones. 
Suppose to the contrary that $y\in F(\text{cmp} \{ x,y, z \})$ or $z\in F(\text{cmp} \{ x,y, z \})$. 
If $y\in F(\text{cmp} \{ x,y, z \})$, then by \textit{weak Arrow axiom},  $y\in  F(\text{cmp} \{ x,y, z \}) \cap \text{cmp} \{ x,y \} \cap \text{cmp} \{ x,y, z \} \subseteq F(\text{cmp} \{ x,y \})$. 
This contradicts $y\notin F(\text{cmp} \{ x,y \})$. On the other hand, if $z\in F(\text{cmp} \{ x,y, z \})$, then by \textit{weak Arrow axiom}, $z\in  F(\text{cmp} \{ x, y,z \}) \cap \text{cmp} \{ y, z \} \cap \text{cmp} \{ x, y, z \} \subseteq F(\text{cmp} \{ y,z \})$, which is also a contradiction with  $z\notin F(\text{cmp} \{ y,z \})$.

We then prove that $x\succ^F z$, that is, $x\in F(\text{cmp} \{ x,z \})$ and $z\notin F(\text{cmp} \{ x,z \})$. 
By $x\in F(\text{cmp} \{ x,y, z \})$ and \textit{weak Arrow axiom}, $x\in   F(\text{cmp} \{ x,y, z \}) \cap \text{cmp} \{ x, z \}\cap \text{cmp} \{ x, y, z \} \subseteq F(\text{cmp} \{ x,z \})$. 
To show $z\notin F(\text{cmp} \{ x,z \})$, suppose to the contrary that $z\in  F(\text{cmp} \{ x,z \})$. 
By $z\in  F(\text{cmp} \{ x,z \}) \backslash  F(\text{cmp} \{ x,y, z \})$ and the contraposition of \textit{IIE}, there exists $w\in F(\text{cmp} \{ x,y, z \}) \backslash \text{cmp} \{ x,z \}$. Therefore, $w\in \text{cmp} \{ y \}$. 
By \textit{efficiency}, $y\in F(\text{cmp} \{ x,y, z \})$, which is a contradiction to $y\notin F(\text{cmp} \{ x,y,z \})$.  
\vspace{3mm}

\textbf{Step 4}: If $F$ satisfies \textit{efficiency}, \textit{continuity}, \textit{weak Arrow axiom}, and \textit{IIE}, then it can be weakly rationalizable by $\succ^F$, that is, 
\[
F(S)=\{x \in S\mid \not \exists y \in S ~~\text{s.t.} ~~ y\succ^F x\},~~\forall S \in \mathcal{B}.
\]

Take $S \in \mathcal{B}$ arbitrarily. 
To prove $F(S) \subseteq  \{ x\in S\mid \nexists y\in S ~~\text{s.t.} ~~ y\succ^F x \}$, suppose to the contrary that there exist $x, y \in S$ with $x\in F(S)$ but $y\succ x$. 
By \textit{weak Arrow axiom}, $F(S\cup \text{cmp}  \{ x, y\})\cap S \cap \text{cmp} \{ x, y\}  =  F(S) \cap F(\text{cmp}  \{ x, y\}) $. Since $x$ is in the left hand side, $x\in F(\text{cmp}  \{ x, y\})$, a contradiction to $y\succ^F x$. 

To prove the converse, let $x\in S$ be such that there is no $y\in S$ with $y\succ^F x$. 
By the definition of $\succsim^F$, $x \in F(\text{cmp}\{x,y\})$ or $y \notin F(\text{cmp}\{x,y\})$ for all $y \in S \setminus \{x\}$.  
However, if $y \notin F(\text{cmp}\{x,y\})$ for some $y \in S \setminus \{x\}$, by \textit{efficiency}, we must have $x \in F(\text{cmp}\{x,y\})$.
Hence, 
\begin{equation}
\label{eq:pr_weakrat1}
x \in F(\text{cmp}\{x,y\}),~~\forall y \in S. 
\end{equation} 
Let $\{ y^k \}_{k\in \mathbb{N}}\subseteq S$ be a sequence such that $\{ \text{cmp} \{x,  y^1 , \ldots , y^k \} \}_{k\in \mathbb{N}} \subseteq \mathcal{B}$ converges to $S$. 
For all $K\in \mathbb{N}$, if $x\in F(\text{cmp} \{x,  y^1 , \ldots , y^K \})$, then by \textit{weak Arrow axiom} and \eqref{eq:pr_weakrat1}, we have  $x\in F(\text{cmp} \{x,  y^1 , \ldots , y^K \}) \cap F(\text{cmp} \{x,  y^{K+1} \}) \subseteq  F(\text{cmp} \{x,  y^1 , \ldots , y^K, y^{K+1} \})$.  
This observation implies that for all $k\in \mathbb{N}$, $x\in F(\text{cmp} \{x,  y^1 , \ldots , y^k \})$. By \textit{continuity}, we have $x\in F(S)$. 
\vspace{3mm}

\textbf{Step 5}: If $F$ satisfies \textit{efficiency}, and \textit{Arrow axiom}, then $\succsim^F$ is transitive.

Let $x, y,z \in \mathbb{R}^n_{++}$ be such that $x \succsim^F y$ and $y \succsim^F z$.
By \textit{efficiency}, $w \in F(\text{cmp}\{x,y,z\})$ for at least one $w \in \{x,y,z\}$.
First, we show that $x \in F(\text{cmp}\{x,y,z\})$.
Suppose that $y \in F(\text{cmp}\{x,y,z\})$.
Then, by \textit{Arrow axiom}, $F(\text{cmp}\{x,y\})=F(\text{cmp}\{x,y,z\}) \cap \text{cmp}\{x,y\}$.
Since $x \succsim^Fy$, we have $x \in F(\text{cmp}\{x,y\})$, so $x \in F(\text{cmp}\{x,y,z\})$.
Suppose that $z \in F(\text{cmp}\{x,y,z\})$.
Similarly, by \textit{Arrow axiom}, $F(\text{cmp}\{y,z\})=F(\text{cmp}\{x,y,z\}) \cap \text{cmp}\{y,z\}$.
Since $y \succsim^Fz$, we have $y \in F(\text{cmp}\{y,z\})$, so $y \in F(\text{cmp}\{x,y,z\})$.
Then, as the above argument, we can also see that $x \in F(\text{cmp}\{x,y,z\})$.
Finally,  by \textit{Arrow axiom}, we can see that $x \in F(\text{cmp}\{x,y,z\})\cap \{x,z\}=F(\text{cmp}\{x,z\})$, that is, $x \succsim^F z$.
\vspace{3mm}

By Steps 1--5, we can obtain the result as follows.
First, Steps 2, 3 and 4 directly imply that (2) of the proposition holds.
Moreover, if $F$ satisfies \textit{Arrow axiom}, by Steps 1 and 4,
\[
F(S)=\{x \in S\mid \not \exists y \in S ~~\text{s.t.} ~~ y\succ^F x\}=\{x \in S \mid x \succsim^F y, \forall y \in S\},~~\forall S \in \mathcal{B}.
\]
Then, by Steps 1 and 5, $\succsim^F$ is complete and transitive, which is the assertion of (1).
\qed

\section{Proof of Theorem \ref{thm:main}}

We first show the only-if part. 
Let $F$ be a solution that satisfies  \textit{weak Arrow axiom}, \textit{IIE}, \textit{efficiency},  \textit{scale invariance}, 
and \textit{continuity}. 
By (2) of Lemma \ref{prop_rationalization}, we have known that
\[
F(S) = \{ x\in S\mid \nexists y\in S ~~\text{s.t.} ~~ y\succ x \},~~\forall S \in \mathcal{B},
\]
where $\succ$ is the asymmetric part of the quasi-transitive, monotone, and continuous binary relation $\succsim$ defined as
\begin{equation*}
        x\succsim y \iff  x\in F(\text{cmp}  \{ x, y\}) 
    \end{equation*}
for all $x,y\in \mathbb{R}^n_{++}$.
Moreover, by \textit{scale invariance}, 
it is easy to verify that $\succ$ is 
scale invariant (i.e., for all $a, x,y\in \mathbb{R}^n_{++}$, $x\succ y$ if and only if $a * x \succ a * y$). 
Furthermore, by the definition of $\succ$, it is asymmetric. 

Next, we construct an improving set.
For $x\in \mathbb{R}^n_{++}$, define $A_x \subseteq \mathbb{R}^n$ as 
\begin{equation}
    \label{eq:def_Ax}
        A_x =\{ (\log y_1 , \ldots, \log y_n ) - (\log x_1 , \ldots, \log x_n )\in \mathbb{R}^n \mid  y\in \mathbb{R}_{++}^n ~~\text{s.t.} ~~ y\succ x \}. 
\end{equation}
We  prove that for all $x, x' \in \mathbb{R}^n_{++}$, $A_x = A_{x'}$. 
For $z\in A_x$, there exists $y\in \mathbb{R}^n_{++}$ such that $y\succ x$ and $z = (\log y_1 , \ldots, \log y_n ) - (\log x_1 , \ldots, \log x_n )$. Take $a\in \mathbb{R}^n_{++}$ as $a\ast x = x'$. 
Since $\succ$ is scale invariant, $a\ast y\succ a\ast x = x'$. 
By 
    \begin{align*}
        &(\log a_ 1y_1 , \ldots, \log a_n y_n ) - (\log a_1 x_1 , \ldots, \log a_n x_n ) \\
        &= 
        (\log y_1  + \log a_1, \ldots, \log y_n + \log a_n ) - (\log x_1 + \log a_1,, \ldots, \log  x_n  + \log a_n) \\
        &= (\log y_1 , \ldots, \log y_n ) - (\log x_1 , \ldots, \log x_n ) \\
        &= z, 
    \end{align*}
we have $z\in A_{x'}$, which implies $A_x\subseteq A_{x'}$. By a similar argument,
we can prove the converse inclusion. 
Therefore, $A_x$ does not depend on $x$. 
Henceforth, we denote the set in \eqref{eq:def_Ax} by $A$. 

We then show that the set $A$ is an improving set.
Since $U_x = \{ y\in \mathbb{R}^n_{++} \mid y\succ x\}$
is open and the logarithm function is continuous, $A$ is also open. 
The property (i) follows from the monotonicity and asymmetry of $\succ$ and the monotonicity of the logarithm function. 
Then, we verify the property (ii).  
Suppose to the contrary that there exist $z, z'\in A$ with $z+ z'\notin A$.  
Let $x, y, w\in \mathbb{R}^n_{++}$ be such that $z =   (\log x_1 , \ldots, \log x_n ) - (\log y_1 , \ldots, \log y_n )$ and $z' =  (\log y_1 , \ldots, \log y_n ) - (\log w_1 , \ldots, \log w_n )$. 
By $z, z'\in A$, we have $x\succ y$ and $y \succ w$. 
However, $z+ z' = (\log x_1 , \ldots, \log x_n ) - (\log w_1 , \ldots, \log w_n ) \notin A$ implies that $x\not\succ w$, which is a contradiction to  the quasi-transitivity of $\succsim$. 
    
Finally, by the above discussion, we have shown that $y \succ x $ is equivalent to $(\log y_1 , \ldots,  \log y_n) - (\log x_1 , \ldots,  \log x_n) \in A$ for any $x, y \in \mathbb{R}^n_{++}$ and $A$ is an improving set.
Therefore, we can conclude that, for all $S \in \mathcal{B}$,
\begin{eqnarray*}
F(S)&=&\{ x\in S\mid \nexists y\in S ~~\text{s.t.} ~~ y\succ x \}\\
&=&\{ x\in S \mid \nexists y \in S  ~~ \text{s.t. } (\log y_1 , \ldots,  \log y_n) - (\log x_1 , \ldots,  \log x_n) \in A \},
\end{eqnarray*}
which completes the proof of the only-if part.

\vspace{3mm}

    Next, we prove the if part. It is easy to prove that the coarse Nash solutions satisfy \textit{efficiency}. 
    To verify the remaining axioms, let $F^A$ be some coarse Nash solution. 
    \begin{itemize}
        \item \textit{Scale invariance}: Let $S\in\mathcal{B}$, $a\in \mathbb{R}^n_{++}$, and $x\in F^A(S)$. 
        By the definition of $F^A$, $F^A (S) = \{ x\in S \mid \nexists y \in S  ~~ \text{s.t. } (\log y_1 , \ldots,  \log y_n) - (\log x_1 , \ldots,  \log x_n) \in A \}$. Therefore, by the definition of $a\ast S$, 
        \begin{align*}
            F^A (a\ast S) 
            &= \{ x\in S \mid \nexists y \in S  ~~ \text{s.t. } (\log a_1 y_1 , \ldots,  \log a_n y_n) - (\log a_1 x_1 , \ldots,  \log a_n  x_n) \in A \} \\
            &= \{ x\in S \mid \nexists y \in S  ~~ \text{s.t. } (\log  y_1 , \ldots,  \log y_n) - (\log  x_1 , \ldots,  \log x_n) \in A \} \\
            &= F^A (S). 
        \end{align*}
        \item \textit{Continuity}: Let  $\{S^k \}_{k\in \mathbb{N}} \subseteq \mathcal{B}$ and $\{ x^k\}_{k\in \mathbb{N}}\subseteq \mathbb{R}_{++}^n$ be such that (i)  $x^k \in F^A(S^k)$ for all $k\in \mathbb{N}$; (ii) $\{S^k \}_{k\in \mathbb{N}}$ converges to some $S \in \mathcal{B}$; (iii) $\{ x^k\}_{k\in \mathbb{N}}$ converges to some $x \in  S$ in $\mathbb{R}_{++}^n$. 
        Take $z\in S$ arbitrarily. Let $\{z^k\}_{k\in\mathbb{N}} \subseteq \mathbb{R}^n_{++}$ be such that $z^k \in S^k$ for each $k\in \mathbb{N}$ and $\{z^k\}_{k\in\mathbb{N}}$ converges to $z$. 
        By the definitions of $F^A$ and $\{x^k\}_{k\in\mathbb{N}}$, we have $( \log z^k_1, \ldots, \log z^k_n)  - (\log x^k_1, \ldots, \log x^k_n) \notin A$ for each $k\in \mathbb{N}$. 
        By the openness of $A$ and the continuity of the log function, $ ( \log z_1, \ldots, \log z_n)  - (\log x_1, \ldots, \log x_n)\notin A$ holds. 
        Since this holds for each $z\in A$, we have $x\in F^A(S)$. 
        
        \item \textit{Weak Arrow axiom}: Let $S, T \in \mathcal{B}$ and $x\in \mathbb{R}^n_{++}$.  
        If $x\in F^A(S) \cap F^A(T)$, then for all $y\in S\cup T$, $ (\log  y_1 , \ldots,  \log y_n) - (\log  x_1 , \ldots,  \log x_n) \not\in A$, which implies $x\in F^A(S\cup T)$. For the converse, suppose that $x\in F^A(S\cup T) \cap S \cap T$. Then, for all $y\in S\cup T$, $ (\log  y_1 , \ldots,  \log y_n) - (\log  x_1 , \ldots,  \log x_n) \not\in A$, which implies $x\in F^A(S)\cap F^A(T)$. 

        \item \textit{IIE}: Let $S, S' \in \mathcal{B}$ be such that $S\subseteq S'$ and $F^A(S') \subseteq S$. 
        Take $x\in F^A (S)$ arbitrarily.
        By the definition of $F^A$,  
        \begin{equation}
        \label{eq:if_wssa}
            (\log  y_1 , \ldots,  \log y_n) - (\log  x_1 , \ldots,  \log x_n) \not\in A,~~\forall y\in S.
        \end{equation} 
        If $x\notin F^A (S')$, then there exists $z\in S' \setminus S$ such that  $ (\log  z_1 , \ldots,  \log z_n) - (\log  x_1 , \ldots,  \log x_n) \in A$. 
        We can set that $z$ is in $F^A (S')$ by the property (ii) of improving sets and \eqref{eq:if_wssa}. This is a contradiction to $F^A (S')\subseteq S $.  
        
    \end{itemize}
\qed

\bibliography{reference.bib}

\end{document}